\title{Interior-Boundary Conditions for Many-Body Dirac Operators and Codimension-1 Boundaries}
\author{
Julian Schmidt,\footnote{Mathematisches Institut,
     Eberhard-Karls-Universit\"at, Auf der Morgenstelle 10, 72076
     T\"ubingen, Germany}\ \footnote{E-mail:
     juls@maphy.uni-tuebingen.de}\ \ 
Stefan Teufel,$^*$\footnote{E-mail: stefan.teufel@uni-tuebingen.de}\ \  and 
Roderich Tumulka$^*$\footnote{E-mail: roderich.tumulka@uni-tuebingen.de}
}
\date{March 26, 2019}
\newcommand{\Hilbert}{\mathscr{H}}
\newcommand{\Q}{\mathcal{Q}}
\newcommand{\sM}{\mathscr{M}}
\renewcommand{\Re}{\mathrm{Re}}
\renewcommand{\Im}{\mathrm{Im}}
\newcommand{\RRR}{\mathbb{R}}
\newcommand{\CCC}{\mathbb{C}}
\newcommand{\scp}[2]{\langle #1|#2 \rangle}
\newcommand{\Lz}{L^2} 
\newcommand{\B}{\mathfrak{B}}
\newcommand{\tr}{\mathrm{tr}}
\newcommand{\vj}{\boldsymbol{j}}
\newcommand{\vk}{\boldsymbol{k}}
\newcommand{\vn}{\boldsymbol{n}}
\newcommand{\vt}{\boldsymbol{t}}
\newcommand{\vx}{\boldsymbol{x}}
\newcommand{\vN}{\boldsymbol{N}}
\newcommand{\valpha}{\boldsymbol{\alpha}}
\newcommand{\vsigma}{\boldsymbol{\sigma}}
\newcommand{\vzero}{\boldsymbol{0}}
\newcommand{\tpsi}{\widetilde{\psi}}
\newcommand{\tA}{\widetilde{A}}
\newcommand{\tE}{\widetilde{E}}
\newcommand{\tF}{\widetilde{F}}
\newcommand{\tL}{\widetilde{L}}
\newcommand{\tS}{\widetilde{S}}
\newcommand{\hA}{\widehat{A}}
\newcommand{\hE}{\widehat{E}}
\newcommand{\hP}{\widehat{P}}
\DeclareMathOperator{\diag}{diag}
\newcommand{\domain}{\mathscr{D}}
\newcommand{\Kegel}{\mathscr{K}}
\newcommand{\tKegel}{\widetilde{\mathscr{K}}}
\newtheorem{thm}{Theorem}
\newtheorem{prop}{Proposition}
\newtheorem{conj}{Conjecture}
\newcommand{\be}{\begin{equation}}
\newcommand{\ee}{\end{equation}}
\newcounter{remarks}
\begin{document}
\maketitle
\begin{abstract}
We are dealing with boundary conditions for Dirac-type operators, i.e., first order differential operators with matrix-valued coefficients, including in particular physical many-body Dirac operators. We characterize (what we conjecture is) the general form of reflecting boundary conditions (which includes known boundary conditions such as the one of the MIT bag model) and, as our main goal, of interior-boundary conditions (IBCs). IBCs are a new approach to defining UV-regular Hamiltonians for quantum field theories without smearing particles out or discretizing space. For obtaining such Hamiltonians, the method of IBCs provides an alternative to renormalization and has been successfully used so far in non-relativistic models, where it could be applied also in cases in which no renormalization method was known. A natural next question about IBCs is how to set them up for the Dirac equation, and here we take first steps towards the answer. For quantum field theories, the relevant boundary consists of the surfaces in $n$-particle configuration space $\RRR^{3n}$ on which two particles have the same location in $\RRR^3$. While this boundary has codimension 3, we focus here on the more basic situation in which the boundary has codimension 1 in configuration space. We describe specific examples of IBCs for the Dirac equation, we prove for some of these examples that they rigorously define self-adjoint Hamiltonians, and we develop the general form of IBCs for Dirac-type operators.

\medskip

  \noindent 
%
%
  Key words: 
  Dirac equation;
  probability current;
  particle creation and annihilation; 
  first-order differential operator.
\end{abstract}

\section{Introduction}

An interior--boundary condition (IBC) is a condition on a wave function $\psi$ defined on a configuration space $\Q$ with boundaries that relates the values (or derivatives) of $\psi$ on the boundary of $\Q$ to the values of $\psi$ at suitable interior points of $\Q$. Like other boundary conditions, an IBC can accompany a partial differential equation to define the time evolution of $\psi$. For defining Hamiltonians with particle creation and annihilation (such as in quantum field theory), $\psi$ would be a particle--position representation of the quantum state, $\Q$ a configuration space of a variable number of particles, and the IBC would involve two configurations related by a creation or annihilation event, i.e., an $n$-particle configuration with two particles at the same location (a configuration regarded as a boundary point of $\Q$) and the $(n-1)$-particle configuration with one of the two particles removed (which is an interior point of $\Q$).

Remarkably, IBCs allow the definition of the Hamiltonian without an ultraviolet (UV) cut-off or renormalization, and it has been verified in the non-relativistic case that the Hamiltonians thus defined exist rigorously and are self-adjoint \cite{TT15a,ibc2a,Lam18}, even in cases in which no renormalization method was known before \cite{Lam18}. In cases in which renormalized Hamiltonians were known to exist, it turned out that they agree with the IBC-Hamiltonian up to addition of an (irrelevant) finite constant \cite{ibc2a,LS18,Sch18}, which suggests that the IBC-Hamiltonians are physically reasonable. It is thus of interest to develop IBCs also for relativistic equations such as the Dirac equation. 

While the boundaries relevant to particle creation usually have codimension 3, the first and most basic question about IBCs for the Dirac operator would be what these conditions can look like for the simplest kind of boundary, i.e., for a boundary of codimension 1 (as in a half space). On these boundaries we focus here. For example, the ``MIT bag model'' of quark confinement \cite{CJJTW74,CJJT74} involves the Dirac equation with a reflecting boundary condition on the surface of the confinement region, a surface of codimension 1; an IBC on this surface would allow for particle creation and annihilation on this surface. Codimension-1 boundaries provide a natural framework for a theory of IBCs, although not the only possible one.

The purpose of this paper is to develop IBCs suitable for the Dirac equation, whose basic form is
\be\label{Dirac}
i\hbar\frac{\partial \psi}{\partial t} = -i\hbar \valpha\cdot \nabla \psi + m\beta \psi\,,
\ee
where $\psi$ is a $\CCC^4$-valued function on $\RRR^3$, we have set the speed of light $c$ to 1, $m\geq 0$ is the mass, and $\valpha=(\alpha^1,\alpha^2,\alpha^3)$ and $\beta$ are the Dirac alpha and beta matrices. We also consider general first-order differential operators with matrix-valued coefficients, which we call ``Dirac-type'' operators.\footnote{In contrast to some authors (e.g., \cite{BB13}), we do not demand that the coefficients satisfy the Clifford relations because we want to include many-particle Dirac operators. See also Footnote~\ref{fn:Clifford} in Section~\ref{sec:config1}.} We formulate what we conjecture is the general form of IBCs on codimension-1 boundaries for Hamiltonians that are first-order differential operators with matrix-valued coefficients, we derive the probability balance equation for $|\psi|^2$, and we obtain that overall probability is conserved. We thus achieve the analog for first-order differential operators of what was achieved in \cite{co1} for the Laplacian operator. Moreover, for a simple example we provide a rigorous proof that the IBC method leads to a self-adjoint Hamiltonian. Our analysis also yields the general form of \emph{reflecting} boundary conditions (as opposed to \emph{interior}--boundary conditions) for Dirac-type equations; as far as we know, this general form has not been described before, while various special cases are well known, including the boundary condition of the MIT bag model.

The approach of IBCs was discussed extensively in \cite{TT15a,TT15b,ibc2a} after previous pioneering work in \cite{Mosh51a,Mosh51b,Tho84,Yaf92, Tum04}. Bohmian trajectories associated with IBCs (including those presented here) are defined in \cite{bohmibc}. In \cite{timelike}, the IBCs developed here are applied to a model of particle emission and absorption by a naked space-time singularity. Lienert and Nickel \cite{LN18} employ an IBC of the type discussed here for particle creation in one-dimensional space and extend it to a multi-time evolution; in this way, they provide an (almost fully relativistic) rigorously defined quantum model of particle creation. Further work on IBCs can be found in \cite{KS15,Sch18,ST18}. For discussion of various reflecting boundary conditions for the Dirac equation, see, e.g., \cite{BB13,FR15}; for absorbing boundary conditions for the Dirac equation, see \cite{detect-dirac}.

This paper is organized as follows. In Section~\ref{sec:ex1}, we describe an explicit example of an IBC for the Dirac equation on a codimension-1 boundary and state our result about self-adjointness. In Section~\ref{sec:reflect1}, as a preparation for the general IBCs, we discuss the reflecting boundary conditions for Dirac-type equations (many of which are known). In Section~\ref{sec:general}, we develop and discuss the general form of IBCs for Dirac-type equations, including a calculation verifying that probability is conserved. 
Appendices~\ref{app:sa} and \ref{app:proofs} contain the mathematical proofs.

\section{Example of an Interior--Boundary Condition}
\label{sec:ex1}

Before turning to the general theory of all possible IBCs, we formulate a concrete example of an IBC for the Dirac equation. Another example in one space dimension is described in \cite{LN18}.

\subsection{Defining Equations}
\label{sec:defex1}

As a configuration space with boundary, consider $\Q=\Q^{(0)}\cup \Q^{(1)}$ with $\Q^{(0)}=\{\emptyset\}$ containing a single point $\emptyset$ (the ``zero-particle configuration'') and $\Q^{(1)}$ a half space
\be
\Q^{(1)}=\RRR^{3}_{>}=\Bigl\{ (x_1,x_2,x_3)\in\RRR^3: x_3\geq 0 \Bigr\}
\ee
with boundary $\partial \Q^{(1)}$ given by the plane $\{x_3=0\}$. The Hilbert space $\Hilbert$ is a ``mini-Fock space'' $\Hilbert = \Hilbert^{(0)} \oplus \Hilbert^{(1)}$ (orthogonal sum) with $\Hilbert^{(0)}=\CCC$ and $\Hilbert^{(1)}=\Lz(\RRR^3_{>},\CCC^4)$. Elements of $\Hilbert$ can be regarded as functions $\psi$ on $\Q$; that is because any function $\psi$ on a disjoint union $\Q^{(0)}\cup \Q^{(1)}$ consists of a function $\psi^{(0)}$ on $\Q^{(0)}$ and a function $\psi^{(1)}$ on $\Q^{(1)}$, here $\psi^{(0)}: \Q^{(0)}\to \CCC$ and $\psi^{(1)}:\Q^{(1)}\to \CCC^4$; since $\Q^{(0)}$ has merely a single element, $\psi^{(0)}$ can be identified with the complex number that is the value at that single element, so that $\psi^{(0)}\in \Hilbert^{(0)}$ and $\psi^{(1)}\in\Hilbert^{(1)}$.

The function $\psi^{(1)}$ obeys the Dirac equation \eqref{Dirac} at every point with $x_3>0$, while 
\be\label{ex1psi0}
i\hbar \frac{\partial \psi^{(0)}}{\partial t} = \int\limits_{\RRR^2} dx_1\, dx_2\, N(x_1,x_2)^\dagger\, \psi^{(1)}(x_1,x_2,0)\,,
\ee
where $N(x_1,x_2)$ is a fixed spinor field that is square-integrable, 
\be\label{Nsqint}
\int\limits_{\RRR^2} dx_1\, dx_2\, N(x_1,x_2)^\dagger\, N(x_1,x_2) <\infty\,,
\ee
and satisfies
\be\label{Nalpha}
N^\dagger(x_1,x_2) \, \alpha^3 \, N(x_1,x_2) =0
\ee
at every $(x_1,x_2)\in\RRR^2$. For example, we can take it to be
\be\label{ex1N}
N(x_1,x_2) = e^{-x_1^2-x_2^2} \begin{pmatrix}1\\0\\1\\0\end{pmatrix}
\ee
expressed in the Weyl representation, in which \cite{gamma}
\be\label{Weyl}
\gamma^0 = \begin{pmatrix}0&I_2\\I_2&0\end{pmatrix}, \quad
\gamma^k = \begin{pmatrix} 0&\sigma^k\\-\sigma^k & 0\end{pmatrix}, \quad
\alpha^3 = \begin{pmatrix} -1&&&\\&1&&\\&&1&\\&&&-1 \end{pmatrix}
\ee
($k=1,2,3$) with $I_2$ the $2\times 2$ identity matrix and $\sigma^k$ the three Pauli matrices. Equations \eqref{Dirac} and \eqref{ex1psi0} are supplemented by the IBC
\be\label{ex1ibc}
(\gamma^3-i) \psi^{(1)}(x_1,x_2,0)
=-\tfrac{i}{\hbar} (\gamma^3-i) \alpha^3 N(x_1,x_2) \, \psi^{(0)}\,,
\ee
where $i$ is short for $i$ times the unit matrix. Note that $\gamma^3$ has eigenvalues $\pm i$, each eigenspace has dimension 2, and the two eigenspaces are orthogonal to each other (as $i\gamma^3$ is self-adjoint); as a consequence, $\gamma^3-i$ is $-2i$ times the projection to the eigenspace with eigenvalue $-i$, and \eqref{ex1ibc} constrains only two components of $\psi^{(1)}(x_1,x_2,0)$. In other words, although \eqref{ex1ibc} is an equation between 4-spinors, it amounts to only 2 independent complex equations per boundary point $(x_1,x_2,0)$. Note further that $(\gamma^3-i)N(x_1,x_2)$ automatically lies in the same eigenspace, so that \eqref{ex1ibc} can always be satisfied. The abstract structure of the IBC \eqref{ex1ibc} thus is to require, for every $(x_1,x_2)$, the pair $\bigl(\psi^{(0)},\psi^{(1)}(x_1,x_2,0)\bigr)$ to lie in a particular 3-dimensional subspace $\mathscr{L}(x_1,x_2)$ of $\CCC\oplus \CCC^4$. 

The model is not translation invariant in the $x_1$ and $x_2$ directions, and it is easy to understand why it cannot be: If the time evolution were $x_1x_2$-translation invariant, then consider the initial wave function $\psi^{(0)}=1$ and $\psi^{(1)}(\vx)=0$, which is also $x_1x_2$-translation invariant. It would follow that $\psi_t$ at any time $t$ has to be $x_1x_2$-translation invariant, but that can only be square-integrable if $\psi_t^{(1)}=0$, that is, if no creation occurs. The violation of $x_1x_2$-translation invariance is also connected to the square-integrability of the $N$ function, which is needed for the self-adjointness of $H$ (which again is closely related to the square-integrability of $\psi_t$ for all $t$).

\subsection{Conservation of Probability}
\label{sec:ex1cons}

As in the non-relativistic case \cite{TT15a,TT15b}, the IBC and the term on the right-hand side of \eqref{ex1psi0} are chosen so as to allow an exchange of $|\psi|^2$ probability between the two sectors while conserving the total amount of $|\psi|^2$. This is verified by the following calculation.

The probability current for the Dirac equation \eqref{Dirac} is
\be\label{vjdef}
\vj = (j^1,j^2,j^3) = \psi^\dagger \valpha \psi\,,
\ee
which comprises the 3 spacelike components of the 4-current
\be\label{jmudef}
j^\mu = \overline{\psi}\gamma^\mu \psi\,.
\ee
In our case, $\vj=\psi^{(1)\dagger} \valpha \psi^{(1)}$. The amount of probability lost per unit time in $\Q^{(1)}$ (which could be positive or negative) is given by the probability current into the boundary,
\begin{align}
\text{loss} ^{(1)}
&= -\int\limits_{\RRR^2} dx_1\, dx_2\,  j^3(x_1,x_2,0)\\
&=-\int\limits_{\RRR^2} dx_1\, dx_2\,  \psi^{(1)}(x_1,x_2,0)^\dagger \alpha^3 \psi^{(1)}(x_1,x_2,0)
\intertext{(the minus sign arising from the fact that $j^3$ is the current in the \emph{positive} $x_3$ direction). The amount of probability gained per unit time in $\Q^{(0)}$ (again, positive or negative) is, by \eqref{ex1psi0},}
\text{gain}^{(0)}
&= \frac{\partial |\psi^{(0)}|^2}{\partial t}\\
&= \tfrac{2}{\hbar} \Im (\psi^{(0)*} H\psi^{(0)})\\
&= \tfrac{2}{\hbar} \Im \Bigl[ \psi^{(0)*} \int\limits_{\RRR^2} dx_1\, dx_2\, N(x_1,x_2)^\dagger\, \psi^{(1)}(x_1,x_2,0) \Bigr]\,.
\end{align}
In order to check that gain$^{(0)}$ = loss$^{(1)}$, it suffices to show that the IBC \eqref{ex1ibc} implies that for every $x_1$ and $x_2$,
\be\label{check1}
\tfrac{2}{\hbar} \Im \Bigl[ \psi^{(0)*} N(x_1,x_2)^\dagger\, \psi^{(1)}(x_1,x_2,0) \Bigr]
=- \psi^{(1)}(x_1,x_2,0)^\dagger \alpha^3 \psi^{(1)}(x_1,x_2,0)\,.
\ee
Writing $\psi_1^{(1)},\psi_2^{(1)},\psi_3^{(1)},\psi_4^{(1)}$ for the four components of $\psi^{(1)}(x_1,x_2,0)$ in the Weyl representation (and $N_1,N_2,N_3,N_4$ for the components of $N$), the IBC \eqref{ex1ibc} can equivalently be rewritten as the conjunction of the two equations
\begin{subequations}
\begin{align}
\psi_1^{(1)} &= -i\psi_3^{(1)} + \tfrac{i}{\hbar} (N_1-iN_3) \psi^{(0)}\\
\psi_4^{(1)} &= -i\psi_2^{(1)} + \tfrac{1}{\hbar} (N_2+iN_4) \psi^{(0)}\,.
\end{align}
\end{subequations}
They allow us to eliminate $\psi_1^{(1)}$ and $\psi_4^{(1)}$ from the expressions in \eqref{check1}, leading to
\begin{align}
\text{lhs\eqref{check1}} &= \tfrac{2}{\hbar}\Re \Bigl[\psi^{(0)*} (-N_1^*-iN_3^*)\psi_3^{(1)} + \psi^{(0)*} (-iN_2^*-N_4^*) \psi_2^{(1)} \Bigr] \nonumber\\
&\quad + \tfrac{2}{\hbar^2} \Bigl( |N_1|^2+|N_4|^2 \Bigr) |\psi^{(0)}|^2 \nonumber\\
&\quad +\tfrac{2}{\hbar^2} \Re \Bigl[-iN_1^* N_3-iN_4^*N_2 \Bigr] |\psi^{(0)}|^2
\end{align}
and
\begin{align}
\text{rhs\eqref{check1}} &= \tfrac{2}{\hbar} \Re \Bigl[ \psi_3^{(1)*} (-N_1+iN_3) \psi^{(0)} + \psi_2^{(1)*} (iN_2 -N_4) \psi^{(0)} \Bigr]\nonumber \\
&\quad + \tfrac{1}{\hbar^2}\Bigl(|N_1|^2+|N_2|^2+|N_3|^2+|N_4|^2 \Bigr)|\psi^{(0)}|^2 \nonumber\\
&\quad +\tfrac{2}{\hbar^2} \Re\Bigl[ -iN_1^* N_3 +i N_2^* N_4 \Bigr] |\psi^{(0)}|^2\,.
\end{align}
Thus, using \eqref{Weyl},
\begin{align}
\text{lhs\eqref{check1}}-\text{rhs\eqref{check1}} 
&= \tfrac{1}{\hbar^2} \Bigl( |N_1|^2-|N_2|^2-|N_3|^2+|N_4|^2 \Bigr) |\psi^{(0)}|^2\\
&= -\tfrac{1}{\hbar^2} N^\dagger \alpha^3 N\, |\psi^{(0)}|^2\,,
\end{align}
which vanishes by virtue of \eqref{Nalpha}. This completes our derivation of \eqref{check1} and therefore our non-rigorous check of probability conservation. A rigorous result will be presented in Section~\ref{sec:sa}.

\subsection{Reflecting Boundary Condition}

When we set $N=0$, the Schr\"odinger equation \eqref{ex1psi0} for $\psi^{(0)}$ reduces to
\be
i\hbar \frac{\partial \psi^{(0)}}{\partial t} =0\,,
\ee
and the IBC \eqref{ex1ibc} reduces to
\be\label{bc3}
(\gamma^3-i) \psi^{(1)}(x_1,x_2,0) = 0\,.
\ee
In this case, loss$^{(1)}$ = 0, so there is no current into the boundary, and $\|\psi^{(0)}\|^2$ and $\|\psi^{(1)}\|^2$ are separately conserved. Thus, \eqref{bc3} is a \emph{reflecting} boundary condition. Indeed, \eqref{bc3} is used in the MIT bag model, and is known to make the Dirac Hamiltonian on the upper half space $\RRR^3_{>}$ self-adjoint \cite{FR15}.

\newpage

\noindent{\bf Remark.}
\begin{enumerate}
\setcounter{enumi}{\theremarks}
\item The condition \eqref{bc3} requires $\psi$ to lie, at every boundary point, in the 2-dimensional eigenspace of $\gamma^3$ with eigenvalue $i$; put differently, it specifies two of the four components of $\psi$ (in a suitable basis) on the boundary. The need for specifying just two of the components can be understood as follows. The boundary condition needs to specify a law for how any wave arriving at the boundary surface gets reflected. To this end, consider a wave function of the form
\be\label{reflected}
\psi(\vx) = u\, e^{i\vk'\cdot \vx} + v \, e^{i\vk\cdot \vx}
\ee
with $u,v\in \CCC^4$, $\vk=(k_1,k_2,k_3)\in \RRR^3$ with $k_3>0$, and $\vk'=(k_1,k_2,-k_3)$ (note the different sign). The first term in \eqref{reflected} is the incoming wave, the second the reflected wave. A law for reflection needs to determine the amplitude $v$ of the reflected wave for any given $u$, where the only incoming waves we need to consider are eigenwaves of the Dirac equation, so that $u$ is subject to the condition
\be
Eu = (\hbar \vk' \cdot\valpha + m\beta) u
\ee
with $E=\sqrt{m^2+\hbar^2\vk^{\prime 2}}$ the energy of the incoming wave. The matrix $\hbar \vk'\cdot \valpha + m \beta$ has eigenvalues $\pm E$, and each eigenspace $\mathscr{E}_{\pm E}(\vk')$ has dimension 2. Thus, $u$ has to lie in the 2-dimensional subspace $\mathscr{E}_{+E}(\vk')$ of $\CCC^4$, and correspondingly, $v$ has to lie in the 2-dimensional subspace $\mathscr{E}_{+E}(\vk)$, with the consequence that two equations are needed for determining $v$.  The boundary condition \eqref{bc3} amounts to
\be
(\gamma^3-i)(u+v)=0\,,
\ee
and this equation indeed provides two equations that determine $v$ from $u$. The wave function $\psi$ of \eqref{reflected} then is an eigenfunction of the Hamiltonian (where the boundary condition defines the domain of the Hamiltonian), so another way of viewing the need to determine $v$ from $u$ is that it comes from the need to determine the eigenfunctions of the Hamiltonian.
\end{enumerate}
\setcounter{remarks}{\theenumi}

\subsection{Self-Adjoint Hamiltonian}
\label{sec:sa}

Here is a rigorous result about the question whether the Hamiltonian $H$ is actually well defined and self-adjoint. The rigorous definition of $H$ includes the IBC in the choice of the \emph{domain} of $H$.

Instead of the half space $\RRR^3_>$, we consider a general region $\Omega\subset \RRR^3$ with $C^2$ boundary $\partial \Omega$. For technical reasons (due to the results from the literature we are using), we will assume that $\partial \Omega$ is 
compact. This assumption excludes the case of the half space $\Omega=\RRR^3_>$ discussed in Section~\ref{sec:defex1} but includes the cases in which $\Omega$ is the ball $B_r(\vzero)$ of radius $r>0$ around the origin $\vzero$ (as in the MIT bag model) or $\Omega=\RRR^3\setminus B_r(\vzero)$ (as for the creation and annihilation of point particles by a source that is a sphere of radius $r$ around the origin).

For all $\vx\in \partial \Omega$ let $\vn(\vx)$ denote the inward (i.e., toward $\Omega$) unit normal vector on the surface $\partial \Omega$ and $d^2\vx$ the surface area measure on $\partial \Omega$. We write
\be
\alpha^{\vn}(\vx) := \vn(\vx) \cdot \valpha ~~~\text{and}~~~
\gamma^{\vn}(\vx) := \vn(\vx) \cdot (\gamma^1,\gamma^2,\gamma^3)\,.
\ee
Let $N:\partial\Omega\to \CCC^4$ be a spinor field that satisfies
\be\label{Nalphaex2}
\int_{\partial \Omega} d^2\vx \, N(\vx)^\dagger \, \alpha^{\vn}(\vx) \, N(\vx) =0\,.
\ee
This assumption is weaker than the analog of \eqref{Nalpha}, i.e.,
\be\label{Nalpha3}
N(\vx)^\dagger \, \alpha^{\vn}(\vx) \, N(\vx) =0~~~~~\forall\vx\in \partial \Omega\,.
\ee
While \eqref{Nalpha3} is needed for the \emph{local} conservation of probability as in \eqref{check1}, \eqref{Nalphaex2} will be sufficient for \emph{global} conservation of probability and thus for self-adjointness.
 
The Hilbert space is $\Hilbert=\Hilbert^{(0)} \oplus \Hilbert^{(1)} := \CCC \oplus \Lz(\Omega,\CCC^4)$, the Schr\"odinger equation is given by the Dirac equation \eqref{Dirac} for $\psi^{(1)}$ in the interior of $\Omega$ and by
\be\label{ex2psi0}
i\hbar \frac{\partial \psi^{(0)}}{\partial t} = \int\limits_{\partial\Omega} d^2\vx\, N(\vx)^\dagger\, \psi^{(1)}(\vx)
\ee
for $\psi^{(0)}$.

\begin{thm}\label{thm:ex2}
Suppose that $\Omega\subset\RRR^3$ is open with compact $C^2$ boundary, and that $N\in H^{1/2}(\partial\Omega,\CCC^4)$ (i.e., the Sobolev space of degree $1/2$) satisfies \eqref{Nalphaex2}. Then the following operator $H$ in the Hilbert space $\Hilbert=\CCC\oplus \Lz(\Omega,\CCC^4)$ is well defined and self-adjoint: The domain $\domain$ of $H$ consists of those elements of $\CCC\oplus H^1(\Omega,\CCC^4)$ satisfying the IBC
\be\label{ex2ibc}
(\gamma^{\vn}(\vx)-i) \psi^{(1)}(\vx)
=-\tfrac{i}{\hbar} (\gamma^{\vn}(\vx)-i) \alpha^{\vn}(\vx)\, N(\vx) \, \psi^{(0)}\,,
\ee
and for $\psi\in \domain$, $(H\psi)^{(0)}$ and $(H\psi)^{(1)}$ are given by the right-hand sides of \eqref{ex2psi0} and \eqref{Dirac}, respectively.
\end{thm}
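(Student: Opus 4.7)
My plan is to show that $H$ is symmetric on $\domain$ and then verify that $(H+i\lambda)\domain=\Hilbert$ for sufficiently large real $\lambda$, which together imply self-adjointness. The key analytic input is the known self-adjointness of the MIT bag Dirac operator $D_{\mathrm{MIT}}$ on $\Omega$ (cited as \cite{FR15}), to which the problem will be reduced via a bounded lift and the algebraic identity \eqref{Nalphaex2}. Symmetry of $H$ is the global analogue of the probability-conservation computation carried out in Section~\ref{sec:ex1cons}: for $\psi,\phi\in\domain$, Green's identity for the Dirac operator produces the surface term $-i\hbar\int_{\partial\Omega} \psi^{(1)\dagger}\alpha^{\vn}\phi^{(1)}\,d^2\vx$, and this is to be cancelled by the cross-sector contributions $\overline{\int N^{\dagger}\psi^{(1)}}\,\phi^{(0)}-\overline{\psi^{(0)}}\int N^{\dagger}\phi^{(1)}$ arising from \eqref{ex2psi0}; the cancellation is pointwise where \eqref{Nalpha3} holds and globally under the weaker \eqref{Nalphaex2} once the IBC \eqref{ex2ibc} is invoked to relate boundary values of $\psi^{(1)},\phi^{(1)}$ to the scalar components $\psi^{(0)},\phi^{(0)}$.

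For surjectivity I would construct a bounded lift $L:\CCC\to H^1(\Omega,\CCC^{4})$ with trace $(Lz)|_{\partial\Omega}=\alpha^{\vn}N\,z$. Since $\partial\Omega$ is compact and $C^2$, the normal vector field $\vn$ is $C^1$ and thus $\alpha^{\vn}N\in H^{1/2}(\partial\Omega,\CCC^{4})$ by the hypothesis $N\in H^{1/2}$; a bounded right inverse to the trace operator $H^1(\Omega)\to H^{1/2}(\partial\Omega)$ then produces $L$. The decisive observation is that for $\psi\in\domain$, the shifted function $\chi:=\psi^{(1)}+\tfrac{i}{\hbar}L\psi^{(0)}\in H^1(\Omega,\CCC^{4})$ satisfies the \emph{homogeneous} MIT bag condition $(\gamma^{\vn}-i)\chi|_{\partial\Omega}=0$, i.e.\ $\chi\in\domain(D_{\mathrm{MIT}})$; conversely any pair $(\psi^{(0)},\chi)\in\CCC\oplus\domain(D_{\mathrm{MIT}})$ yields a unique element of $\domain$.

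To solve $(H+i\lambda)\psi=f$ for given $f=(f^{(0)},f^{(1)})\in\Hilbert$, rewrite the Dirac equation $(D+i\lambda)\psi^{(1)}=f^{(1)}$ in the variable $\chi$ as
\be
(D_{\mathrm{MIT}}+i\lambda)\chi=f^{(1)}+\tfrac{i}{\hbar}(D+i\lambda)L\psi^{(0)},
\ee
which has a unique solution $\chi\in\domain(D_{\mathrm{MIT}})$ depending linearly on $(f^{(1)},\psi^{(0)})$ via the resolvent of $D_{\mathrm{MIT}}$. Substituting $\psi^{(1)}=\chi-\tfrac{i}{\hbar}L\psi^{(0)}$ into the scalar equation $\int N^{\dagger}\psi^{(1)}|_{\partial\Omega}+i\lambda\,\psi^{(0)}=f^{(0)}$ and using \eqref{Nalphaex2} to kill the self-coupling $\int N^{\dagger}(L\psi^{(0)})|_{\partial\Omega}=\psi^{(0)}\int N^{\dagger}\alpha^{\vn}N=0$ yields a scalar equation $(i\lambda+b(\lambda))\,\psi^{(0)}=f^{(0)}-a(\lambda;f^{(1)})$, where $a$ and $b$ are explicit expressions involving the trace of $(D_{\mathrm{MIT}}+i\lambda)^{-1}$ applied to $L^2$ data. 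For $|\lambda|$ large the prefactor $i\lambda+b(\lambda)\sim i\lambda$ is nonzero, so $\psi^{(0)}$ and hence $\psi\in\domain$ exist, giving $(H+i\lambda)\domain=\Hilbert$.

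The main obstacle I anticipate is the rigorous handling of the boundary trace of resolvent expressions: one must show that $(D_{\mathrm{MIT}}+i\lambda)^{-1}$ maps $L^2(\Omega,\CCC^{4})$ into $H^1(\Omega,\CCC^{4})$ so that $\int N^{\dagger}\bigl[(D_{\mathrm{MIT}}+i\lambda)^{-1}f^{(1)}\bigr]|_{\partial\Omega}$ is a bounded functional of $f^{(1)}$, and that the symbol $b(\lambda)$ is controlled uniformly so that the scalar equation is solvable. The $H^1$-regularity of the MIT bag resolvent is precisely the place where compactness of $\partial\Omega$ enters, via elliptic regularity results quoted from the literature; this dependence on external results is what forces the compactness hypothesis in the theorem's statement and excludes the half-space considered in Section~\ref{sec:defex1}.
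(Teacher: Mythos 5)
Your strategy is sound and genuinely different from the one in Appendix~\ref{app:sa}. The paper proves self-adjointness by showing directly that $\domain(H^*)\subseteq\domain(H)$: starting from $\phi\in\domain(H^*)$ one first places $\phi_1$ in the maximal domain $\mathfrak{D}(\Omega)$, then extracts the IBC for $\phi$ by testing against lifted boundary data, and finally upgrades $\tr\,\phi_1$ from $H^{-1/2}(\partial\Omega)$ to $H^{1/2}(\partial\Omega)$ using the Calder\'on-type operators $C_\pm$ and the smoothing operator $\mathcal{A}$ of \cite{OV18}, so that $\phi_1\in H^1(\Omega)$. You instead use the range criterion $\mathrm{Ran}(H\pm i\lambda)=\Hilbert$ and reduce to the MIT bag operator via the affine substitution $\chi=\psi^{(1)}+\tfrac{i}{\hbar}L\psi^{(0)}$, which converts the inhomogeneous IBC into the homogeneous condition $(\gamma^{\vn}-i)\,\tr\,\chi=0$; this substitution is correct, since $\tr(L\psi^{(0)})=\alpha^{\vn}N\psi^{(0)}$ makes the two sides of \eqref{ex2ibc} cancel, and \eqref{Nalphaex2} kills the self-coupling term exactly as you say. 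Your route buys a cleaner black-box reduction (you only need self-adjointness of $D_{\mathrm{MIT}}$ together with $\domain(D_{\mathrm{MIT}})\subseteq H^1(\Omega)$, both available for compact $C^2$ boundary from the same reference), at the price of a perturbative resolvent computation; the paper's route needs the finer trace calculus of \cite{OV18} but never has to invert anything.

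One step needs repair. You propose to show $i\lambda+b(\lambda)\neq 0$ by controlling $b(\lambda)$ uniformly for large $|\lambda|$, but the naive bounds do not deliver this: $b(\lambda)=\langle N,\tr\,\chi_L\rangle$ with $\chi_L=\tfrac{i}{\hbar}(D_{\mathrm{MIT}}+i\lambda)^{-1}(D+i\lambda)L$, and since $L\cdot 1\notin\domain(D_{\mathrm{MIT}})$ you cannot absorb the $i\lambda L$ term into the resolvent, so the available estimate on the $H^1$-norm of $\chi_L$ (hence on its trace) grows like $|\lambda|$. Fortunately no asymptotics are needed: if $i\lambda+b(\lambda)=0$ for some real $\lambda\neq 0$, then taking $f=0$ and $\psi^{(0)}=1$ in your scheme produces a nonzero $\psi\in\domain$ with $(H+i\lambda)\psi=0$, contradicting $\|(H+i\lambda)\psi\|^2=\|H\psi\|^2+\lambda^2\|\psi\|^2$ for the symmetric operator you have already established. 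Hence $i\lambda+b(\lambda)\neq 0$ for every real $\lambda\neq 0$, surjectivity holds for both signs, and your argument closes. (You should also record that $\domain$ is dense; this is immediate from your bijection, since for each fixed $\psi^{(0)}$ the admissible $\psi^{(1)}$ form a translate of the dense set $\domain(D_{\mathrm{MIT}})$.)
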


We give the proof in Appendix~\ref{app:sa}.

\section{Reflecting Boundary Conditions for Dirac-Type Hamiltonians}
\label{sec:reflect1}

Before we turn to general interior-boundary conditions in Section~\ref{sec:general}, we need a discussion of general reflecting boundary conditions.

\subsection{Setup: Configuration Space, Hilbert Space, and Dirac-Type Differential Operators}
\label{sec:config1}

We take the configuration space $\Q$ to be a manifold with boundary.\footnote{According to the definition of a manifold with boundary, every interior point has a neighborhood on which a coordinate chart is defined whose image is an open set in $\RRR^d$ for some $d$, while every boundary point has a neighborhood on which a coordinate chart is defined whose image is the intersection of an open set and a closed half-space in $\RRR^d$. In particular, the boundary has codimension 1, i.e., dimension $d-1$. The boundary may be empty.}
We write $\partial \Q$ for the boundary and $\Q^\circ=\Q\setminus \partial \Q$ for the interior of $\Q$. We take $\Q$ to be equipped with a Riemann metric $g_{ab}$, which also defines a volume measure $\mu$ on $\Q$; likewise, the metric defines a surface area measure $\lambda$ on $\partial \Q$. 

The wave function $\psi$ is a spinor-valued function on $\Q$, $\psi:\Q\to \CCC^{r}$; we denote the inner product in spin space $\CCC^{r}$ by $(\psi | \phi) = \psi^\dagger \phi$. More generally, we can take $\psi$ to be a cross-section of a vector bundle $E$ over $\Q$ of finite rank $r=\dim_{\CCC} E_q$ (dimension of fiber spaces). This case comes up, for example, when considering the Dirac equation in curved space-time (see, e.g., \cite{FR15}). We assume that $E$ is a \emph{Hermitian vector bundle}, i.e., a complex vector bundle equipped with a positive definite Hermitian inner product $(\ |\ )_q$ in every fiber $E_q$ and a metric connection, i.e., a connection relative to which the inner product is parallel, or
\be\label{Hermitianbundle}
\nabla\Bigl(\psi(q) \Big| \phi(q) \Bigr)_q=\Bigl(\nabla\psi(q) \Big| \phi(q) \Bigr)_q + \Bigl(\psi(q) \Big| \nabla\phi(q) \Bigr)_q\,.
\ee
We also write $|\psi(q)|^2$ for $\bigl(\psi(q) \big| \psi(q)\bigr)_q$, which is the density relative to $\mu$ of the probability distribution in $\Q$ associated with $\psi\in\Hilbert$ with $\|\psi\|=1$.

The Hilbert space $\Hilbert= \Lz(\Q,E,\mu)$ consists of the square-integrable cross-sections of $E$ and is equipped with the inner product
\be\label{inprdef}
\scp{\psi}{\phi} = \int_{\Q} \mu(dq) \, \bigl(\psi(q) \big| \phi(q)\bigr)_q\,.
\ee

A \emph{Dirac-type operator} is a differential expression of first order,\footnote{\label{fn:Clifford}When defining a ``Dirac-type operator,'' some authors (e.g., \cite{BB13}) demand that the coefficients $A^a(q)$ satisfy the Clifford relations, $A^a A^b +A^b A^a = 2g^{ab}$. However, while that is true of single-particle Dirac Hamiltonians, also in curved space-time and in any dimension, it is not true of many-particle Dirac Hamiltonians. For example, for two Dirac particles in Euclidean 3-space, the configuration space $\Q$ can be taken to be the orthogonal sum of two Euclidean 3-spaces ($d_n=6$) with $A^1=\alpha^1_1$ (i.e., $\alpha^1$ acting on the first spin index), $A^2=\alpha^2_1, A^3=\alpha^3_1, A^4=\alpha^1_2,A^5=\alpha^2_2, A^6=\alpha^3_2$, so $A^1$ and $A^4$ commute (as they act on different indices) instead of anti-commute (as would correspond to Clifford relations). At any rate, we will not use Clifford relations in the following.}
\be\label{Hexpression}
H\psi(q) = -i\hbar\sum_{a=1}^{d} A^a(q) \nabla_{\!a}\psi(q) + B(q)\psi(q)\,,
\ee
where $d=\dim_{\RRR}\Q$, $\nabla$ is the covariant derivative (corresponding to the connection of $E$), and $A^a(q)$ and $B(q)$ are endomorphisms 
of $E_q$ or, equivalently, elements of $E_q\otimes E_q^*$, where the star denotes the dual space; more precisely, $A(q)\in E_q\otimes E_q^*\otimes \CCC T_q \Q$. 
The choice of the arbitrary prefactor as $-i\hbar$ will be convenient later. By a \emph{Dirac-type equation} we mean the associated Schr\"odinger equation
\be\label{Schr2}
i\hbar\partial_t \psi = H\psi\,.
\ee 
In order to define a time evolution, the expression \eqref{Hexpression} will need to be supplemented by boundary conditions that we discuss below. 

\bigskip

\noindent{\bf Example~1.} The free Dirac equation \eqref{Dirac} for a single particle in flat space-time corresponds to $\Q$ being Euclidean 3-space, the boundary $\partial\Q$ being empty, and $E$ being the trivial vector bundle $\Q\times \CCC^4$ (i.e., $E_q=\CCC^4$) equipped with the standard inner product on $\CCC^4$, $(\psi | \phi) = \psi^\dagger\phi=\overline{\psi}\gamma^0\phi$, and the trivial connection (so that covariant derivatives coincide with partial derivatives); $\Hilbert=\Lz(\RRR^3,\CCC^4)$; $A^a(q)=\alpha^a$ ($a=1,2,3$) are the Dirac alpha matrices, and $B(q)=m\beta$ with $\beta$ the Dirac beta matrix.

For $N>1$ identical free Dirac particles, $\Q$ can be taken to be $(\RRR^3)^N$, $\partial \Q=\emptyset$, $E=\Q\times (\CCC^4)^{\otimes N}$ with the trivial connection, $(\psi | \phi) = \psi^\dagger \phi = \overline{\psi} \gamma^0\otimes \cdots \otimes \gamma^0 \phi$, $\Hilbert$ comprises the anti-symmetric functions in $\Lz((\RRR^3)^N,(\CCC^4)^{\otimes N})$, $A^{3n+i-3}(q) = \alpha_n^i$ ($i=1,2,3$ and $n=1,\ldots,N$), and $B(q) = m\sum_{n=1}^N \beta_n$. (Alternatively, we can take $\Q$ to be the space of unordered configurations \cite{fermionic} $\{q\subset \RRR^3: \# q=N\}$, and $E$ the tensor product of the fermionic line bundle and the set-indexed tensor product of spin spaces $\CCC^4$.)$\hfill\square$

\bigskip

We want to obtain a continuity equation
\be\label{conti}
\partial_t |\psi(q)|^2 = - \mathrm{div}\, j(q) := -\sum_{a=1}^{d} \nabla_{\!a} j^a(q)\,,
\ee
where $j$ is a (time-dependent) vector field on $\Q$ playing the role of the probability current and $\nabla$ is the covariant derivative defined by the Riemann metric on $\Q$. (The notation $\nabla_{\!a} j^a$ means, as in general relativity, that we first take the derivative of the vector field $j$ and then take the $aa$-component of the result, rather than the derivative of the scalar function that is the $a$-th component of the vector field $j$. That makes a difference in the case of curved metrics.)
From \eqref{Hexpression} and \eqref{Schr2}, we do obtain that
\begin{align}
\partial_t |\psi(q)|^2
&= \tfrac{2}{\hbar}\, \Im \Bigl(\psi(q) \Big| H\psi(q) \Bigr)_{\!q} \\
&= -\sum_{a=1}^{d} \biggl(\psi(q) \bigg|  A^a(q) \nabla_{\!a}\psi(q)\biggr)_{\!\!q}  -\sum_{a=1}^{d} \biggl(\nabla_{\!a}\psi(q) \bigg| A^a(q)^\dagger \,\psi(q)\biggr)_{\!\!q} \nonumber\\
&\quad +\: \tfrac{2}{\hbar}\, \Im \biggr(\psi(q) \bigg|  B(q)\psi(q)\biggr)_{\!\!q} \quad,\label{last}
\end{align}
where the dagger ${}^\dagger$ denotes the adjoint endomorphism relative to $(\ |\ )_q$. The expression \eqref{last} is of the form \eqref{conti} with
\be\label{jdef}
j^a(q) = \Bigl(\psi(q) \Big| A^a(q)\psi(q)\Bigr)_{\!q} \,,
\ee
provided that
\be\label{Asa}
A^a(q) \text{ is self-adjoint}
\ee
and $B(q)$ is of the form
\be\label{Bsa}
B(q) = B_0(q) -\tfrac{i\hbar}{2}\sum_{a=1}^{d} \nabla_{\!a} A^a(q) 
\text{ with self-adjoint $B_0(q)$.} 
\ee
We henceforth assume that the conditions \eqref{Asa} and \eqref{Bsa} are fulfilled.

\bigskip

\noindent{\bf Example~2.} For the free Dirac equation, \eqref{Asa} is satisfied because the Dirac alpha matrices are self-adjoint, and \eqref{Bsa} is satisfied because the Dirac beta matrix is self-adjoint and $\nabla_{\!a}A^a(q)=0$ (because $A^a(q)=\alpha^a$ is constant and $\nabla_{\!a}=\partial_a$). Note also that the general definition \eqref{jdef} of the current $j$ agrees with the earlier specific one in \eqref{vjdef}.$\hfill\square$

\bigskip

Conditions \eqref{Asa} and \eqref{Bsa} are the formal (algebraic) conditions needed for self-adjoint\-ness. It is known \cite{Che} that if $\Q$ has no boundary and is complete,\footnote{``Completeness'' requires that boundary points cannot be left out of $\Q$ but must be included in the manifold-with-boundary.} and if the propagation speed $c(q)$ (see below) is bounded on $\Q$, then $H$ given by \eqref{Hexpression} extends uniquely (from $C_c^\infty(E)$, the space of smooth compactly supported cross-sections) to a self-adjoint operator in $\Lz(E)$. The propagation speed $c(q)$ of wave functions at $q$ is the supremum over $u\in T_q \Q$ with $|u|=1$ of the operator norm (largest absolute eigenvalue) of $u_a A^a(q)$.

\subsection{Boundary Conditions}
\label{sec:known}

Most known boundary conditions (e.g., \cite{BB13,FR15}) are \emph{reflecting} boundary conditions that will make the Hamiltonian self-adjoint while involving no interior points (but see also \cite{detect-dirac} for \emph{absorbing} boundary conditions). Before setting up IBCs, it will be useful to recap the general form of reflecting boundary conditions. 

Consider a boundary point $q\in\partial \Q$, let $n(q)$ denote the inward-pointing\footnote{Inward-pointing means that there is a curve $c:[0,\delta)\to \Q$ with $c(0)=q$ and $\dot{c}(q)=n(q)$.} unit normal vector to the boundary at $q$ (relative to the Riemann metric $g_{ab}$), and let $A^n$ be the endomorphism of $E_q$ given by
\be\label{Andef}
A^n= n(q)\cdot A(q) = \sum_{a,b=1}^{d_n}n^a(q)\, A^b(q)\, g_{ab}(q)\,. 
\ee
Let $E^0_q$ denote the kernel of $A^n$ and $E^{\pm}_q$ the sum of the eigenspaces with positive (negative) eigenvalues, so
\be
E_q = E^+_q \oplus E^-_q \oplus E^0_q\,.
\ee
Let $P^0$ denote the orthogonal projection to $E^0_q$ and $P^{\pm}$ that to $E^{\pm}_q$; we write $A^{\pm}$ for the restriction of $A^n(q)$ to $E^{\pm}_q$. Put differently, $P^0=1_{\{0\}}(A)$, $P^+=1_{(0,\infty)}(A)$, and $P^-=1_{(-\infty,0)}(A)$, where $1_S(x)$ means the characteristic function of the set $S$.

Let us derive what reflecting linear boundary conditions look like. Such a condition must exclude any current into the boundary (to enable $H$ as in \eqref{Hexpression} to be self-adjoint); that is, it must ensure that
\be
j^n(q) = n(q) \cdot j(q) =0
\ee
at every boundary point $q$. Thus, it must specify a subspace $S_q \subset E_q$ such that
\be\label{psiAnpsi=0}
\bigl( \psi\big| A^n \, \psi \bigr)_q=0 
\ee
for every $\psi\in S_q$. Splitting $\psi$ into its parts in $E_q^0,E_q^+,E_q^-$, we can rewrite \eqref{psiAnpsi=0} as
\be
\bigl( P^+\psi\big| A^+ P^+\psi \bigr)_q
= \bigl( P^-\psi\big| -A^- P^-\psi \bigr)_q
\ee
or
\be
\Bigl\| (A^+)^{1/2} P^+\psi \Bigr\|^2
= \Bigl\| (-A^-)^{1/2} P^-\psi \Bigr\|^2 \,.
\ee
This will follow if we specify a unitary isomorphism $L: E^+_q \to E^-_q$ and demand that
\be\label{bc1}
(-A^-)^{1/2} P^-\psi = L \, (A^+)^{1/2} P^+\psi \,,
\ee
a linear condition that fixes $P^-\psi$ in terms of $P^+ \psi$. This, with $L$ depending on $q$, is the desired reflecting boundary condition, and $S_q$ is the set of $\psi\in E_q$ satisfying \eqref{bc1}. In Section~\ref{sec:Lagrangian} we will enter a deeper analysis and argue that these are the only possible reflecting boundary conditions. 

\bigskip

\noindent{\bf Example~3.} For the free Dirac equation on the upper half space $\RRR^3_{>}$,
we have that $A^n=\alpha^3$, which has eigenvalues $\pm 1$, each with multiplicity 2. Thus, $\dim E_q^+=\dim E_q^-$, both $A^+$ and $-A^-$ are the identity on their respective domains, and the boundary condition \eqref{bc1} reduces to
\be\label{bc2}
\Bigl[ P^- -L(x_1,x_2) P^+ \Bigr] \psi(x_1,x_2,0)=0
\ee
with $L(x_1,x_2)$ a unitary isomorphism between the two eigenspaces. In particular, the boundary condition indeed specifies two of the four components of $\psi$ on the boundary (viz., $P^- \psi$). In the Weyl representation \eqref{Weyl}, in which
$\alpha^3 = \diag(-1,1,1,-1)$, we have that $P^0=0$, $P^+=\diag(0,1,1,0)$, and $P^-=\diag(1,0,0,1)$. 
A particular choice of $L: E^+_q\to E^-_q$ is (expressed as a matrix acting on the whole spin space $\CCC^4$)
\be\label{ex1L}
LP^+ = \begin{pmatrix} 0&0&-i&0\\0&0&0&0\\0&0&0&0\\0&-i&0&0 \end{pmatrix},~~\text{so}~~
R=\begin{pmatrix} 1&0&i&0\\0&0&0&0\\0&0&0&0\\0&i&0&1 \end{pmatrix}
\ee
independently of $q$; for this choice, the boundary condition \eqref{bc2} becomes
\be
\psi_1 = -i\psi_3\,, \qquad \psi_4 = -i\psi_2 \qquad \text{on }\{x^3=0\}
\ee
in the Weyl representation, which is equivalent to $(\gamma^3-i)\psi=0$
as in \eqref{bc3}.$\hfill\square$

\newpage

\noindent{\bf Remark.}
\begin{enumerate}
\setcounter{enumi}{\theremarks}

\item It is easy to understand why any boundary condition of the form 
\be\label{bc-+}
P^-\psi = C P^+\psi
\ee
with some linear mapping $C:E_q^+\to E_q^-$ determining $P^-\psi$ from $P^+\psi$ would lead to \eqref{bc1}: It would yield
\be
j^n(q) = \bigl( P^+\psi \big| A^+ P^+ \psi \bigr) + \bigl( CP^+\psi  \big| A^- CP^+\psi  \bigr)\,,
\ee
which vanishes for all choices of $P^+\psi$ if and only if
\be\label{ACAC}
A^+ + C^\dagger A^- C =0\,.
\ee
If we define
\be\label{LAC}
L= (-A^-)^{1/2} C (A^+)^{-1/2}\,, 
\ee
then 
\be\label{CAL}
C= (-A^-)^{-1/2} L (A^+)^{1/2}\,,
\ee 
and \eqref{ACAC} is equivalent to
\be
I_{E_q^+} =  L^\dagger L\,,
\ee
which means that $L$ must be unitary to its image.
\end{enumerate}
\setcounter{remarks}{\theenumi}

\subsection{Complete Lagrangian Subspaces}
\label{sec:Lagrangian}

Before we can enter the discussion of the general IBC, we need a more thorough discussion of the possible reflecting boundary conditions. We want to identify the subspaces $S_q$ that can be used as a reflecting boundary condition. Let $\Kegel$ denote the set of all $\psi\in E_q$ satisfying \eqref{psiAnpsi=0}; it is not a subspace but a cone. We demand that $S_q \subseteq \Kegel$, but that will not be sufficient; for example, it is well known (e.g., \cite{ADV97a}) that $S_q=\{0\}$ (i.e., the homogeneous Dirichlet boundary condition $\psi|_{\partial \Q}=0$) is \emph{not} a possible boundary condition for the Dirac equation.

Here is what we argue is the right condition. Let us begin with a few definitions: 
For any subspace $S$ of $E_q$, let 
\be
S^\#=\bigl\{\phi\in E_q:  (\phi|A^n \chi)=0~~\forall \chi \in S \bigr\}\,.
\ee
A subspace $S\subseteq E_q$ is called a \emph{complete Lagrangian subspace} relative to $A^n$ iff\footnote{iff = if and only if} 
\be
S=S^\#\,.
\ee
Equivalently, $S$ is complete Lagrangian iff
\be
\phi \in S ~\Leftrightarrow~ \forall\chi\in S: \bigl( \phi\big| A^n \, \chi \bigr)=0 \,.
\ee

\begin{conj}\label{conj:refl}
Let $S_q\subseteq E_q$ be a subbundle. The boundary condition
\be\label{bcS}
\psi(q) \in S_q~~\forall q\in \partial\Q
\ee
can occur in a self-adjoint extension (from $C_c^\infty(E|_{\Q^\circ})$) of $H$ in $\Lz(E)$ iff $S_q$ is a complete Langrangian subspace of $E_q$ relative to $A^n$ for every $q\in\partial\Q$.
\end{conj}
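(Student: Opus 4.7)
The plan is to reduce the conjecture to a standard boundary-form analysis for symmetric first-order systems. The key input is a Green's identity: for $\psi,\phi$ in a natural $H^1$-type subdomain, integration by parts combined with the self-adjointness \eqref{Asa} of $A^a$ and the structural assumption \eqref{Bsa} on $B$ yields
\begin{equation*}
\scp{\phi}{H\psi} - \scp{H\phi}{\psi} = -i\hbar \int_{\partial\Q} \lambda(dq) \, \bigl(\phi(q) \big| A^n(q) \psi(q)\bigr)_q.
\end{equation*}
The boundary data is thus governed by the Hermitian sesquilinear form $[\phi,\psi]_q := (\phi | A^n \psi)_q$ on $E_q$, which has signature $(\dim E^+_q, \dim E^-_q, \dim E^0_q)$, and the ``complete Lagrangian'' condition $S_q = S_q^\#$ is exactly the maximal-isotropy condition for this form. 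The entire argument hinges on identifying self-adjoint extensions with maximal isotropic boundary subbundles.

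For the ``if'' direction, I would define $H_S$ as the restriction of \eqref{Hexpression} to
\begin{equation*}
\domain(H_S) = \bigl\{\psi\in H^1(E) : \psi(q)\in S_q\text{ for $\lambda$-a.e.\ } q\in\partial\Q\bigr\}.
\end{equation*}
Symmetry of $H_S$ follows at once from the Green identity and the inclusion $S_q \subseteq S_q^\#$. To show $H_S^*\subseteq H_S$, take $\phi\in\domain(H_S^*)$ and apply the Green identity to test functions $\psi\in\domain(H_S)$ whose boundary traces sweep out all of $S_q$; this forces $[\phi,\psi]_q = 0$ for all $\psi(q)\in S_q$, i.e., $\phi(q)\in S_q^\# = S_q$. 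An accompanying regularity argument (using that $H_S^*\phi\in\Lz$ together with the principal symbol of $H$) then lifts $\phi$ into $H^1$, giving $\phi\in\domain(H_S)$. For the ``only if'' direction, if $\tilde H$ is a self-adjoint extension whose domain is characterized by $\psi(q)\in S_q$, the symmetry of $\tilde H$ combined with Green forces $[\phi,\psi]_q = 0$ pointwise for $\phi,\psi\in S_q$ (localizing near an arbitrary $q_0\in\partial\Q$ with test functions of prescribed boundary value in $S_{q_0}$), hence $S_q\subseteq S_q^\#$; and if this inclusion were strict at some $q_0$, an $H^1$-extension of a boundary value in $S_{q_0}^\#\setminus S_{q_0}$ would lie in $\domain(\tilde H^*)\setminus\domain(\tilde H)$, contradicting self-adjointness.

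The main obstacle is functional-analytic rather than algebraic. The eigenspaces $E^+_q, E^-_q, E^0_q$ can fail to form smooth subbundles of $E|_{\partial\Q}$ precisely where eigenvalues of $A^n(q)$ cross, which complicates both the formulation of a measurable/smooth subbundle condition $\psi(q)\in S_q$ and the surjectivity step of the necessity argument (realizing arbitrary elements of $S_{q_0}^\#$ as boundary traces of $H^1$ functions in $\domain(\tilde H^*)$). Handling this rigorously would require abstract trace theory for first-order symmetric systems in the spirit of B\"ar--Ballmann \cite{BB13}, together with an elliptic-type regularity statement for Dirac-type operators under rough Lagrangian boundary conditions. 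A secondary subtlety is that the natural domain may need to be enlarged beyond $H^1$ to the maximal domain $\{\psi\in\Lz : H\psi\in\Lz\}$ carrying an abstract boundary form, since $H^1$-regularity is not automatically preserved by reflecting conditions; these technical issues are presumably why the statement is formulated as a conjecture rather than a theorem.
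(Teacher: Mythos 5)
The first thing to say is that the paper does not prove this statement: it is stated as Conjecture~\ref{conj:refl} and is accompanied only by a short plausibility argument (the Green's identity, the requirement that the boundary form $\bigl(\phi(q)\big|A^n\psi(q)\bigr)_q$ vanish pointwise, and the observation that the adjoint-domain condition amounts to $S_q^\#\subseteq S_q$), with a pointer to Everitt--Markus for the general relevance of complete Lagrangian subspaces. Your proposal reproduces exactly this heuristic --- Green's identity, the Hermitian boundary form of signature $(\dim E^+_q,\dim E^-_q,\dim E^0_q)$, and the identification of self-adjoint boundary conditions with maximal isotropic subbundles --- and then goes somewhat further by sketching how one would organize the two directions. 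In that sense you are on the same track as the authors. (One minor slip: with the paper's convention that $n(q)$ is the \emph{inward} normal, the boundary term carries the sign $+i\hbar$, not $-i\hbar$; this does not affect the isotropy argument.)

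However, what you have written is a proof \emph{plan}, not a proof, and the places where it is incomplete are precisely the places that make the statement a conjecture rather than a theorem. Concretely: (i) the step ``an accompanying regularity argument then lifts $\phi$ into $H^1$'' is the entire analytic content of the ``if'' direction --- for an element $\phi$ of the adjoint domain one only knows a priori that $\phi$ lies in the maximal domain $\{\psi\in\Lz: H\psi\in\Lz\}$, whose boundary trace in general lives only in $H^{-1/2}(\partial\Q)$, so the pointwise boundary form $\bigl(\phi(q)\big|A^n\psi(q)\bigr)_q$ is not even defined and must be replaced by a dual pairing; recovering $H^{1/2}$ (hence $H^1$) regularity from the Lagrangian boundary condition is exactly what the paper's Appendix~\ref{app:sa} has to work hard for in the special case of Theorem~\ref{thm:ex2}, using the Ourmi\`eres-Bonafos--Vega Calder\'on-projector machinery, and no such argument is available at the stated level of generality (noncompact or merely measurable boundary data, eigenvalue crossings of $A^n(q)$, unbounded coefficients). (ii) In the ``only if'' direction, producing an element of $\domain(\tilde H^*)\setminus\domain(\tilde H)$ from a single vector in $S_{q_0}^\#\setminus S_{q_0}$ requires an extension whose trace lies in $S_q^\#$ for a.e.\ $q$, not just at $q_0$, which again presupposes subbundle regularity of $S^\#$ and a surjective trace map. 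You name these obstacles honestly at the end, but naming them is not the same as overcoming them; since the paper itself leaves the statement unproved, your proposal should be read as a reasonable expansion of the authors' own heuristic rather than as a proof of the conjecture.
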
 

Here is why this is plausible. For the Hamiltonian $H$ with boundary condition \eqref{bcS}, integration by parts yields that
\be
\scp{\phi}{H\psi} - \scp{H\phi}{\psi} = i\hbar \int_{\partial \Q} \!\!\! \lambda(dq) \, \bigl( \phi(q) \big| A^n \psi(q)\bigr)_q \,.
\ee 
First of all, this needs to vanish for all $\phi$ and $\psi$ in the domain, and since we are interested in local boundary conditions, we need the integrand $\bigl( \phi(q) \big| A^n \psi(q)\bigr)_q$ to vanish pointwise, i.e., $S_q \subset \Kegel$. Second, we also need that the domain of the \emph{adjoint} of $H$-with-\eqref{bcS} is no greater than the domain of $H$-with-\eqref{bcS}; to this end, we need that any $\phi$ with the property that $\bigl( \phi(q) \big| A^n \psi(q)\bigr)_q=0$ for all $\psi$ satisfying the boundary condition, satisfies itself the boundary condition; and that amounts to the complete Lagrangian property of $S_q$. See \cite{EM05} for a broader discussion of the relevance of complete Lagrangian subspaces to self-adjoint extensions.

\begin{prop}\label{prop:refl}
Let $E$ be a finite-dimensional complex Hilbert space, $A:E\to E$ a self-adjoint endomorphism, $P^0=1_{\{0\}}(A)$, $P^+=1_{(0,\infty)}(A)$, $P^-=1_{(-\infty,0)}(A)$, and let $E^0$ and $E^\pm$ be the ranges of $P^0$ and $P^\pm$, respectively. The complete Lagrangian subspaces $S$ of $E$ relative to $A$ are in a natural bijective relation to the unitary isomorphisms $L:E^+\to E^-$, given by
\be\label{SL}
S= \bigl\{\psi\in E: (-A^-)^{1/2} P^-\psi = L (A^+)^{1/2} P^+ \psi \bigr\}\,.
\ee  
In particular, complete Langrangian subspaces exist iff
\be\label{dim+-}
\dim E^+ = \dim E^-\,.
\ee
\end{prop}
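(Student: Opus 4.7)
The plan is to establish the bijection by constructing maps in both directions. First I would verify that for any unitary $L: E^+ \to E^-$, the set $S_L$ defined by \eqref{SL} is complete Lagrangian. Mutual orthogonality of $E^0, E^+, E^-$ (as eigenspaces of the self-adjoint $A$) gives, for $\phi,\chi \in S_L$,
\begin{align*}
(\phi | A \chi) &= \bigl((A^+)^{1/2} P^+\phi \big| (A^+)^{1/2} P^+\chi\bigr) \\
&\quad - \bigl((-A^-)^{1/2} P^-\phi \big| (-A^-)^{1/2} P^-\chi\bigr),
\end{align*}
and the defining relation of $S_L$ together with $L^\dagger L = I_{E^+}$ shows these terms cancel, so $S_L \subseteq S_L^\#$. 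To upgrade this to equality I would count dimensions: the constraint in \eqref{SL} determines $P^-\psi$ uniquely from $P^+\psi$ and is empty on $P^0\psi$, giving $\dim S_L = \dim E^0 + \dim E^+$. Since $E^0 \subseteq S_L$, the identity $S_L^\# = A^{-1}(S_L^\perp)$ together with $\operatorname{range}(A) = (E^0)^\perp$ yields $\dim S_L^\# = \dim E^0 + \dim E - \dim S_L = \dim E^0 + \dim E^-$; existence of the unitary $L$ forces $\dim E^+ = \dim E^-$, so the two dimensions agree and $S_L = S_L^\#$.

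For the converse, given a complete Lagrangian $S$ I would first observe that $E^0 \subseteq S$: for $\psi_0 \in E^0$ and any $\chi \in S$, $(\psi_0|A\chi) = (A\psi_0|\chi) = 0$, whence $\psi_0 \in S^\# = S$. Setting $\tE := E^+ \oplus E^-$ and $\tS := S \cap \tE$, the same dimension identity run in reverse ($\dim S = \dim S^\#$) gives $\dim \tS = (\dim E^+ + \dim E^-)/2$. The crucial step is that $P^+|_{\tS}$ is injective: if $\psi \in \tS$ and $P^+\psi = 0$, then $\psi = P^-\psi$, so isotropy yields $0 = (\psi|A\psi) = -\|(-A^-)^{1/2} P^-\psi\|^2$, forcing $\psi = 0$. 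The same argument shows $P^-|_{\tS}$ is injective, hence $\dim \tS \leq \min(\dim E^+, \dim E^-)$; combined with $\dim \tS = (\dim E^+ + \dim E^-)/2$ this forces $\dim E^+ = \dim E^-$ and makes both $P^{\pm}|_{\tS}$ bijections. Defining $M: E^+ \to E^-$ by $M P^+\psi := P^-\psi$ for $\psi \in \tS$ and then $L := (-A^-)^{1/2} M (A^+)^{-1/2}$, isotropy of $\tS$ translates directly into $\|Lv\| = \|v\|$ for all $v \in E^+$, so $L$ is an isometry of same-dimension spaces and hence unitary.

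Finally I would check that the two constructions are mutually inverse: plugging $L$ back into \eqref{SL} reproduces exactly the $\tS$ from which $L$ was built (augmented by $E^0$), while starting from a unitary $L$ the reverse recipe obviously recovers it. The main obstacle is the bookkeeping in the presence of the kernel $E^0$, where the form $(\phi|A\chi)$ is degenerate; the cleanest route is to first deduce $E^0 \subseteq S$ for any complete Lagrangian and then restrict to $\tE$, where what remains is essentially the standard linear-algebra fact that a complete Lagrangian for an indefinite Hermitian form of signature $(p,q)$ exists iff $p=q$, with the scaling by $(\pm A^\pm)^{1/2}$ normalizing that form to the split form with respect to which unitary isomorphisms classify Lagrangians.
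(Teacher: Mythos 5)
Your proof is correct, and its overall skeleton matches the paper's: both arguments first establish $E^0\subseteq S$, reduce to the complement $S'=S\cap(E^+\oplus E^-)$, obtain $L$ as the $(\pm A^\pm)^{1/2}$-normalized graph map built from $P^-\circ(P^+|_{S'})^{-1}$, and deduce unitarity from isotropy. Where you genuinely diverge is in the two places where the \emph{completeness} $S^\#\subseteq S$ enters. The paper proves surjectivity of $P^+|_{S'}$ by a direct contradiction (if $\phi\in E^+$ were orthogonal to $P^+S$, then $\phi'=(A^+)^{-1}\phi$ would lie in $S^\#=S$ and hence vanish), and it verifies that the set \eqref{SL} equals its own $\#$ by an explicit computation of $(\phi|A\chi)$. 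You replace both steps with a single dimension identity, $\dim S^\#=\dim E^0+\dim E-\dim S$ for any subspace $S\supseteq E^0$ (via $S^\#=A^{-1}(S^\perp)$ and rank--nullity), which pins down $\dim S'=\tfrac12(\dim E^+{+}\dim E^-)$ and, combined with the injectivity of both $P^\pm|_{S'}$, forces $\dim E^+=\dim E^-$ and bijectivity in one stroke; the same identity upgrades the isotropy of $S_L$ to completeness. This route is arguably cleaner and makes the signature condition \eqref{dim+-} emerge transparently, at the cost of leaning harder on finite-dimensionality (rank--nullity, and ``injective between equidimensional spaces implies bijective''), whereas the paper's surjectivity argument is computational but essentially dimension-free. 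One small point you should make explicit: the formula $\dim S_L^\#=\dim E^0+\dim E-\dim S_L$ requires $S_L^\perp\subseteq\mathrm{range}(A)$, i.e.\ $E^0\subseteq S_L$, which does hold because the defining equation of \eqref{SL} is vacuous on $E^0$, but it is used silently in your count.
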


We give the proof in Appendix~\ref{app:proofs}.

In the following, we will use the abbreviation
\be\label{Rdef}
R=\sqrt{-A^-}P^- - L\sqrt{A^+}P^+= (P^- -LP^+)\sqrt{|A^n|}
\ee
and write \eqref{bc1} as
\be
R(q) \psi(q) =0\,.
\ee
$R(q)$ is an endomorphism of $E_q$ whose kernel is $S_q$.

\subsection{Examples Involving Reflecting Boundary Conditions}

\noindent{\bf Example~4.} Several authors \cite{FG87,ADV97b} have determined all self-adjoint extensions of the Dirac operator in 1 space dimension, $H=-i\hbar\alpha^1 \partial+m\beta$, on an interval $[0,1]$. For the 1d Dirac equation, spin space is 2-dimensional, and we can take $\gamma^0=\beta=\sigma_1$, $\gamma^1=\sigma_1 \sigma_3$, $\alpha^1=\gamma^0\gamma^1=\sigma_3$. We write the components of $\psi$ as $\psi_-$ and $\psi_+$ (opposite to the eigenvalue of $\alpha^1$). All of the self-adjoint extensions involve boundary conditions, some of them non-local as they relate $\psi(0)$ to $\psi(1)$; those that are local are of the form
\be
\psi_+(0) = e^{i\theta} \psi_-(0)\,, \quad \psi_+(1) = e^{i\varphi} \psi_-(1)
\ee
with fixed parameters $\theta,\varphi\in\RRR$. These conditions are special cases of the above scheme \eqref{bc1} for reflecting boundary conditions with $A^n=\alpha^1=\diag(1,-1)$, $P^+=\diag(1,0)$, $P^-=\diag(0,1)$,
\be
LP^+= \begin{pmatrix} 0&0 \\ e^{i\theta}&0 \end{pmatrix}\,, \quad \text{so }
R= \begin{pmatrix} 0&0 \\ -e^{i\theta}&1 \end{pmatrix}
\ee
and correspondingly with $\varphi$.$\hfill\square$

\bigskip

\noindent{\bf Example~5.} Al-Hashimi and Wiese \cite[Sec.~4.2]{AHW12} considered the 3d Dirac equation \eqref{Dirac} in a region $\Omega\subset \RRR^3$ with smooth boundary $\partial\Omega$ and aimed at formulating ``the most general perfectly reflecting boundary condition.'' They used the Dirac representation of spin space \cite{gamma}, in which
\be
\alpha^k = \begin{pmatrix} 0&\sigma_k \\ \sigma_k &0\end{pmatrix}\,,
\ee
and proposed a family of boundary conditions 
\be\label{AHWbc}
\begin{pmatrix}\psi_{D3}\\\psi_{D4} \end{pmatrix} = i\, \vn(\vx)\cdot \vsigma\, T \begin{pmatrix} \psi_{D1}\\\psi_{D2}\end{pmatrix}
\ee
for all $\vx\in \partial\Omega$, where $\psi_{Di}$ are the four components of $\psi$ in the Dirac representation, $\vn(\vx)$ is the inward unit normal vector to $\partial\Omega$ at $\vx$, and $T$ is an arbitrary but fixed self-adjoint $2\times 2$ matrix. They showed that the subspace comprising those $\psi\in\CCC^4$ that satisfy \eqref{AHWbc} lies in $\Kegel$, i.e., that \eqref{AHWbc} implies $\psi^\dagger (\vn(\vx)\cdot \valpha) \psi=0$. In fact, \eqref{AHWbc} is a special case of \eqref{bc1} with (expressed in the Weyl representation and assuming that $\vn(\vx)$ points in the $x_3$-direction)
\be\label{LT}
LP^+ = \frac{1}{\mathscr{N}} \begin{pmatrix} 0&2ib&(1-ia)(1+ic)-|b|^2&0 \\ 0&0&0&0 \\ 0&0&0&0 \\ 0&(1+ia)(1-ic)-|b|^2&-2ib^*&0 \end{pmatrix}
\ee
for
\be
T= \begin{pmatrix} a&b\\b^*&c \end{pmatrix} \quad \text{with $a,c\in \RRR$ and $b\in\CCC$.} 
\ee
Here, the denominator is
\be
\mathscr{N} = (1+ia)(1+ic) + |b|^2\,,
\ee
which is always non-zero, as one easily checks.

However, Al-Hashimi and Wiese missed some boundary conditions, as some unitary $L:E_q^+ \to E_q^-$ are not of the form \eqref{LT}, for example
\be\label{AHWex}
LP^+= 
\begin{pmatrix} 0&0&e^{i\kappa}&0\\0&0&0&0\\0&0&0&0\\0&-1&0&0 \end{pmatrix} \text{ or }
\begin{pmatrix} 0&0&-1&0\\0&0&0&0\\0&0&0&0\\0&e^{i\kappa}&0&0 \end{pmatrix} \text{ or }
\begin{pmatrix} 0&e^{i\kappa}&0&0\\0&0&0&0\\0&0&0&0\\0&0&e^{-i\kappa}&0 \end{pmatrix} 
\ee
for any $\kappa\in\RRR$. More generally, starting from the boundary condition \eqref{bc1} or \eqref{bc2}, which in the Weyl representation (again with $\vn(\vx)$ in the $x_3$-direction) reads
\be
\begin{pmatrix} \psi_1\\\psi_4 \end{pmatrix} = L 
\begin{pmatrix} \psi_2\\\psi_3 \end{pmatrix}\,,
\ee
and writing $L$ in the form
\be
L = \begin{pmatrix} \cos \zeta e^{i\eta} &\sin \zeta e^{i\kappa}\\ \sin \zeta e^{i\tau} & -\cos \zeta e^{i(\kappa+\tau-\eta)} \end{pmatrix}
\ee
with $\zeta\in [0,\tfrac{\pi}{2}]$ and $\kappa,\tau,\eta\in [0,2\pi)$, as any unitary $2\times 2$ matrix can be written, one finds that
\be
T=-\frac{i}{\mathscr{N}'} \begin{pmatrix} 
1+\sin \zeta(e^{i\tau}-e^{i\kappa})-e^{i(\kappa+\tau)} & -2\cos \zeta e^{i\eta}\\
2\cos \zeta e^{i(\kappa+\tau-\eta)} & 1-\sin\zeta (e^{i\tau}-e^{i\kappa})-e^{i(\kappa+\tau)}
\end{pmatrix}
\ee
exists and is self-adjoint as soon as the denominator
\be
\mathscr{N}' = 1+\sin \zeta (e^{i\tau}+e^{i\kappa})+e^{i(\kappa+\tau)}
\ee
does not vanish. However, it does vanish for the examples in \eqref{AHWex}. The set of subspaces parametrized by the self-adjoint $2\times 2$ matrices $T$ is a non-compact (and thus proper) subset of the compact set of subspaces $S_q$ parametrized by the unitary $2\times 2$ matrices $L$.
$\hfill\square$

\bigskip

\noindent{\bf Example~6.} Benguria et al.~\cite{Ben17} proved the self-adjointness of the 2d Dirac operator (with spin space $\CCC^2$) in a region $\Omega\subset \RRR^2$ with $C^2$-smooth boundary $\partial \Omega$ and boundary condition
\be\label{bcBen}
P_{\eta(q),q} \psi(q) =0 \quad \forall q\in \partial\Q\,,
\ee
where $P_{\eta(q),q}$ is the projection to a certain 1d subspace of the spin space $\CCC^2$, viz., the eigenspace with eigenvalue $-1$ of $(\cos \eta(q)) \, \vt(q)\cdot \vsigma+(\sin\eta(q)) \, \vN \cdot \vsigma$, where $\eta(q)$ is an arbitrary real parameter, $\vt(q)$ is the unit tangent vector to $\partial \Q$ at $q$, and $\vN$ is the unit normal in $\RRR^3$ to the plane containing $\Omega$. (Except that they did not prove self-adjointness for cases with $\cos \eta(q)=0$.) In a Weyl representation with $\vn(q) \cdot \vsigma= \diag(1,-1)$ (where $\vn$ is the unit normal in $\Omega$ on $\partial \Omega$), so $P^+=\diag(1,0)$ and $P^-=\diag(0,1)$, the boundary condition \eqref{bcBen} is equivalent to
\be
P^- \psi = \begin{pmatrix} 0&0\\ e^{i\eta(q)} & 0\end{pmatrix}  P^+\psi
\ee
in our notation and thus a special case of the general boundary condition \eqref{bc1}.
$\hfill\square$

\bigskip

\noindent{\bf Example~7.} Lienert \cite{L15a,L15c,LN15} considered point interaction between massless Dirac particles in 1 space dimension, implemented through a reflecting boundary condition. For simplicity, we focus on the case of two particles. The boundary is the diagonal $\partial \Q = \{(z,z):z\in\RRR\}$ in configuration space $\Q=\{(z_1,z_2)\in\RRR^2:z_1\leq z_2\}$. For two particles, the appropriate Dirac equation reads
\be
i\hbar \partial_t \psi(z_1,z_2) = \Bigl(-i\hbar \alpha_1^1 \partial_{z_1} -i\hbar\alpha_2^1 \partial_{z_2}\Bigr) \psi(z_1,z_2)\,,
\ee
and we write the components of the wave function $\psi:\Q\to \CCC^2\otimes \CCC^2$ as
\be
\psi= \begin{pmatrix} \psi_{--}\\\psi_{-+}\\\psi_{+-}\\\psi_{++} \end{pmatrix} \in \CCC^2 \otimes \CCC^2\,.
\ee
In this representation, $A^1(q)=\alpha^1_1=\diag(1,1,-1,-1)$ and $A^2(q)=\alpha^1_2=\diag(1,-1,1,-1)$ for every $q\in\Q$. Thus, for every $q\in \partial \Q$,
\be\label{L15An}
A^n=-\tfrac{1}{\sqrt{2}} A^1(q) + \tfrac{1}{\sqrt{2}} A^2(q) =\diag\Bigl(0,-\sqrt{2},+\sqrt{2},0\Bigr)\,, 
\ee
so $P^+=\diag(0,0,1,0)$, $P^-=\diag(0,1,0,0)$, and $P^0=(1,0,0,1)$. The unitary $L:E_q^+ \to E_q^-$ is just a phase, $L=e^{i\theta}$ or, expressed as a matrix on the whole spin space,
\be\label{L15LP+}
LP^+=\begin{pmatrix} 0&0&0&0\\0&0&e^{i\theta}&0\\0&0&0&0\\0&0&0&0 \end{pmatrix}\,,~~~\text{so }
R=2^{1/4}\begin{pmatrix} 0&0&0&0\\0&1&-e^{i\theta}&0\\0&0&0&0\\0&0&0&0 \end{pmatrix}\,.
\ee
Correspondingly, Lienert's boundary condition is only one equation:
\be\label{BCL}
\psi_{-+}(z,z) - e^{i\theta}\psi_{+-}(z,z) = 0\,.
\ee
\hfill$\square$

\section{General Form of IBCs for Dirac-Type Hamiltonians}
\label{sec:general}

We now turn to the question what IBCs can look like in general for Hamiltonians that are first-order differential operators with matrix-valued coefficients.

\subsection{Setup: Configuration Space, Hilbert Space, and Dirac-Type Differential Operators}
\label{sec:config2}

We take the configuration space $\Q$ to be a finite or countable union of disjoint spaces with boundary, $\Q=\cup_n \Q^{(n)}$. For the sake of simplicity, we may assume that each $\Q^{(n)}$ is a manifold with boundary, although that implies that the boundary $\partial \Q^{(n)}$ of $\Q^{(n)}$ is itself a manifold \emph{without} boundary. In many applications, one may want to allow that the boundary $\partial \Q^{(n)}$ has itself a boundary; for example, this situation arises when $\Q^{(n)}$ is an $n$-particle configuration space of the form $\sM^n$, $n=0,1,2,\ldots$, where $\sM$ is a manifold with boundary.

We write $\partial \Q$ for $\cup_n \partial \Q^{(n)}$ and $\Q^\circ=\Q\setminus \partial \Q$ for the interior of $\Q$. We take $\Q$ to be equipped with a Riemann metric $g_{ab}$, which also defines a volume measure $\mu^{(n)}$ on $\Q^{(n)}$, and thus a measure $\mu$ on $\Q$, $\mu(S)=\sum_n \mu^{(n)}(S\cap \Q^{(n)})$; likewise, the metric defines a surface area measure $\lambda$ on $\partial \Q$. 

The wave function $\psi$ is a spinor-valued function on $\Q$. That is, its restriction to the sector $\Q^{(n)}$ of $\Q$ is a function $\psi^{(n)}:\Q^{(n)}\to \CCC^{r_n}$; we denote the inner product in spin space $\CCC^{r_n}$ by $(\psi | \phi) = \psi^\dagger \phi$. More generally, we can take $\psi^{(n)}$ to be a cross-section of a vector bundle $E^{(n)}$ over $\Q^{(n)}$ of finite rank $r_n=\dim_{\CCC} E_q^{(n)}$ (dimension of fiber spaces). We write $E$ for the entire vector bundle $\cup_n E^{(n)}$ and $E_q$ for $E_q^{(n)}$ if $q\in\Q^{(n)}$. We assume that $E^{(n)}$ is a \emph{Hermitian vector bundle} as discussed around \eqref{Hermitianbundle}. 

The Hilbert space $\Hilbert= \Lz(\Q,E,\mu)$ consists of the square-integrable cross-sections of $E$ and is equipped with the inner product \eqref{inprdef} with the understanding that $\int_{\Q}$ means the same as $\sum_n \int_{\Q^{(n)}}$, so that $\Hilbert=\oplus_n \Lz(\Q^{(n)},E^{(n)},\mu^{(n)})$.

A \emph{Dirac-type operator} will again be a differential expression of first order as in \eqref{Hexpression} (with $d=d_n=\dim_{\RRR}\Q^{(n)}$ now depending on the sector).

The IBC will be so constructed that the amount of probability per time that flows out of the boundary at $q'\in\partial \Q$ gets added to $|\psi|^2$ at an interior point
\be
q=f(q')
\ee
in a different sector,\footnote{In terms of the Bohmian trajectories \cite{bohmibc}, whenever the Bohmian configuration $Q(t)$ reaches the boundary at $q'$, it jumps to $f(q')$; conversely, from an interior point $q$, the Bohmian configuration can spontaneously jump to a boundary point $q'$ with $f(q')=q$, and this happens at a rate given in \cite{bohmibc}.} $f:\partial\Q \to \Q^\circ$. Since many boundary points $q'$ can be mapped to the same interior point $q$, we will need to talk of the set of those $q'$, which will be denoted
\be
f^{-1}(q)= \bigl\{ q'\in\partial\Q: f(q')=q \bigr\}\,,
\ee
and a measure $\nu_q$ over $f^{-1}(q)$. The appropriate (unnormalized) ``uniform'' measure for our purpose is characterized by
\be\label{munulambda}
\int_{\Q} \mu(dq) \, \nu_q \bigl( S\cap f^{-1}(q) \bigr) =\lambda (S)
\ee
for any set $S\subseteq \partial\Q$ (see \cite{co1} for more detail).

Again, we write $A^n$ for the component $n(q) \cdot A(q)$ of $A(q)$ normal to the boundary as in \eqref{Andef}; $E^0$, $E^+$, and $E^-$, respectively, for the kernel, the positive spectral subspace, and the negative spectral subspace of $A^n$; and $P^0$, $P^+$, and $P^-$ for the corresponding projections.

\subsection{A Class of IBCs}
\label{sec:ibc}

Before describing the most general IBC in Section~\ref{sec:full}, we write down a simpler type of IBC.

The Hamiltonian acts at every interior point $q$ according to
\be\label{Hdef}
H\psi(q) = -i\hbar\sum_{a=1}^{d_n} A^a(q) \nabla_{\!a}\psi(q) + B(q)\psi(q)
+ \!\!\! \int\limits_{f^{-1}(q)} \!\!\! \nu_q(dq') \, N(q')^\dagger\, \psi(q')\,.
\ee
Here, $N(q')$ is a complex-linear mapping $E_{f(q')}\to E_{q'}$, and the adjoint $N^\dagger:E_{q'}\to E_{q}$ of a linear mapping $N:E_{q}\to E_{q'}$ is defined by
the relation
$(\psi | N^\dagger \phi)_{q} = (N\psi | \phi)_{q'}$
for all $\psi\in E_{q}$ and $\phi\in E_{q'}$.

The IBC reads: for every boundary point $q$,
\be\label{IBC}
R(q) \, \psi(q) = -\tfrac{i}{\hbar} \, R(q) \, A^{\mathrm{inv}}(q)\, N(q) \, \psi\bigl(f(q)\bigr)\,,
\ee
with $R=R(q)$ as in \eqref{Rdef} (based again on a unitary isomorphism $L=L(q):E_q^+\to E_q^-$) and $A^{\mathrm{inv}}$ the inverse of $A^n$ on the orthogonal complement of its kernel,
\be\label{Ainvdef}
A^{\mathrm{inv}}= P^+ (A^+)^{-1} P^+ + P^- (A^-)^{-1} P^-\,.
\ee
If 0 is not an eigenvalue of $A^n$, then $A^{\mathrm{inv}}=(A^n)^{-1}$.

The function $N$ is required to satisfy
\be\label{Ncond1}
P_q^0 \, N(q) = 0
\ee
and
\be\label{Ncond2}
N(q)^\dagger A^{\mathrm{inv}}(q)\, N(q) =0
\ee
at every $q\in\partial \Q$. The conservation of probability will be verified later for the more general IBC of Section~\ref{sec:full}.

\bigskip

\noindent{\bf Example~8.} In the example of Section~\ref{sec:ex1}, $f(q')=\emptyset\in \Q^{(0)}$ for every $q'\in \partial \Q^{(1)}$, so $f^{-1}(\emptyset)=\partial \Q^{(1)}$; furthermore, $\nu_{\emptyset}(dx_1 \times dx_2) = dx_1 \, dx_2$, $E_{\emptyset}=\CCC$, and $E_q=\CCC^4$ for $q\in \Q^{(1)}$ with $(\psi|\phi)_q = \psi^\dagger \phi$ as before. As in Example~3, $A^n=\alpha^3$ at every boundary point, so $P^0=0$, $P^+=\diag(0,1,1,0)$, and $P^-=\diag(1,0,0,1)$. Furthermore, $L$ and $R$ are again given by \eqref{ex1L}. 
The $N$ function is given by \eqref{ex1N}, which satisfies 
\eqref{Ncond2} and trivially \eqref{Ncond1}. 
Since $A^+$ and $-A^-$ are the identity on their domains, $R=P^- -LP^+$ and $A^{\mathrm{inv}}=\alpha^3$. 
The IBC \eqref{IBC} becomes
\be
R \psi^{(1)}(x_1,x_2,0) = -\tfrac{i}{\hbar} R \alpha^3 N(x_1,x_2) \, \psi^{(0)}\,.
\ee
Multiplying both sides from the left by $-\gamma^0-i$ (which yields an equivalent equation because $-\gamma^0-i$ is invertible because $i$ is not an eigenvalue of $-\gamma^0$ because $\gamma^0$ is self-adjoint)
yields the IBC \eqref{ex1ibc} of Section~\ref{sec:ex1}, as $(-\gamma^0-i)R=\gamma^3-i$.$\hfill\square$

\bigskip

\noindent{\bf Example~9.} Lienert and Nickel \cite{LN18} studied a model of particle creation in one space dimension. We now explain how it fits into the general scheme we have presented. 

Their massless particles move in $\RRR^1$, and they split and coalesce according to $x \leftrightarrows x+x$. For their model, they prove existence and uniqueness of solutions of the evolution (even multi-time) for suitably smooth initial wave functions. Their Hamiltonian would be local on full Fock space, but to avoid technical difficulties they truncated the Fock space at a maximal particle number. For simplicity, we limit ourselves to the case of two sectors that they discussed in their section~3. The configuration space is $\Q=\Q^{(1)} \cup \Q^{(2)}$ with $\Q^{(1)}=\RRR$, $\Q^{(2)}=\{(z_1,z_2)\in \RRR^2: z_1\leq z_2\}$, and $\partial \Q =\partial \Q^{(2)}= \{(z,z):z\in\RRR\}$. The Riemann metric on $\Q$ is just the Euclidean metric of $\RRR^1$ and $\RRR^2$; $\mu^{(n)}$ is just $n$-dimensional volume, and $\lambda(\{(z,z):z\in B\})= \sqrt{2} \mu^{(1)}(B)$. The jump mapping is $f(z,z)=z$; the set $f^{-1}(q)$ always has either 0 or 1 element (0 for $q\in \Q^{(2)}$, 1 for $q\in \Q^{(1)}$); for $q=z\in \Q^{(1)}$, $\nu_z$ is $\sqrt{2}$ times the counting measure, $\nu_z\{(z,z)\}=\sqrt{2}$, and $n(z,z)=(-1/\sqrt{2},1/\sqrt{2})$ is the inward-pointing unit normal vector. We write the components of $\psi^{(n)}$ as
\be
\psi^{(1)} = \begin{pmatrix} \psi_- \\ \psi_+ \end{pmatrix} \in \CCC^2,~~~~~\psi^{(2)}= \begin{pmatrix} \psi_{--}\\\psi_{-+}\\\psi_{+-}\\\psi_{++} \end{pmatrix} \in \CCC^2 \otimes \CCC^2\,.
\ee
As in Example~7, we use the 1d Dirac equation; $A^n$ is given by \eqref{L15An} and $LP^+$ and $R$ by \eqref{L15LP+} at all $(z,z)\in\partial \Q$. 
However, the boundary condition \eqref{BCL} of Example~7 now gets replaced with an IBC of the form
\be\label{IBCLN}
\psi_{-+}^{(2)}(z,z) - e^{i\theta}\psi^{(2)}_{+-}(z,z) = B\, \psi^{(1)}(z)
\ee
with $B$ a certain $1\times 2$ matrix. Their Hamiltonian is of the form \eqref{Hdef}, explicitly 
\begin{align}
(H\psi)^{(1)}(z) &= -i\alpha^1 \partial_z \psi^{(1)}(z) + \sqrt{2}\, N(z)^\dagger \, \psi^{(2)}(z,z)\\
(H\psi)^{(2)}(z_1,z_2) &= (-i\alpha_1^1 \partial_1 -i\alpha_2^1 \partial_2) \psi^{(2)}(z_1,z_2)
\end{align}
with $N(z)=N$ (independent of $z$) a certain linear mapping $\CCC^2\to \CCC^2\otimes \CCC^2$ (a $4\times 2$ matrix) that Lienert and Nickel called $-\tfrac{1}{\sqrt{2}} A^\dagger$, see Eq.s~(23) and (24) in \cite{LN18}. Condition \eqref{Ncond1} requires that the first and last rows of $N$ vanish, in agreement with Eq.~(35) of \cite{LN18}. Condition \eqref{Ncond2} amounts to $N^\dagger \diag(0,-1,1,0) N=0$ or $-N_{2i}^* N_{2k}+N_{3i}^* N_{3k}=0$ for $i,k=1,2$, which is equivalent to Eq.~(36) of \cite{LN18}. 
We further obtain that $A^{\mathrm{inv}}=2^{-1/2}\diag(0,-1,+1,0)$, and the IBC \eqref{IBC} prescribes that the matrix $B$ on the right-hand side of \eqref{IBCLN} is given by 
\be
B = -\tfrac{i}{\hbar}2^{-1/4}R A^{\mathrm{inv}} N = \tfrac{i}{\sqrt{2}\hbar} (N_{21}+e^{i\theta}N_{31}, N_{22}+e^{i\theta}N_{32})\,,
\ee
which, considering $N=-\tfrac{1}{\sqrt{2}} A^\dagger$, agrees with Eq.~(37) of \cite{LN18}.\hfill$\square$

\subsection{General IBC}
\label{sec:full}

The expression \eqref{Hdef} for the action of the Hamiltonian is already the general one. The IBC \eqref{IBC}, however, will now be replaced by a wider class of conditions. Since an IBC is a linear relation between $\psi(q)$ with $q\in\partial\Q$ and $\psi(f(q))$, it corresponds to a subspace $\tS_q$ of 
\be\label{tEdef}
\tE_q := E_q \oplus E_{f(q)}
\ee
in the sense that the boundary condition reads
\be\label{bctS}
\bigl(\psi(q),\psi(f(q)) \bigr) \in \tS_q \quad \forall q\in\partial\Q\,.
\ee
The space $\tS_q$ must be so chosen as to ensure conservation of probability and self-adjointness of the Hamiltonian. So let us investigate the conservation of probability.

\label{sec:conserv}

As the balance equation of $|\psi|^2$, we obtain from \eqref{Hdef} that
\begin{align}
\partial_t |\psi(q)|^2
&= \tfrac{2}{\hbar}\, \Im \Bigl(\psi(q) \Big| H\psi(q) \Bigr)_{\!q} \\
&= -\sum_{a=1}^{d_n} \nabla_{\!a} j^a(q) 
+  \!\!\! \int\limits_{f^{-1}(q)} \!\!\! \nu_q(dq') \, \tfrac{2}{\hbar}\, \Im \Bigl(\psi(q) \Big| N(q')^\dagger \psi(q') \Bigr)_{\!q} \,.
\label{gain}
\end{align}
The first term (divergence of the current) represents the transport of probability by continuous flow within one sector $\Q^{(n)}$, whereas the second term times $\mu(dq)$ equals the amount of probability gained per time in the volume element $dq$ around $q$ due to the last term in the Hamiltonian \eqref{Hdef}. The amount of probability lost per time due to flux into the part $f^{-1}(dq)$ of the boundary $\partial\Q$ is equal to
\be\label{loss}
\text{loss} = -\!\!\! \int\limits_{f^{-1}(dq)}\!\!\! \lambda(dq') \, j^n(q')\,.
\ee
Conservation of probability means that gain = loss, or that $\mu(dq)$ times the last term in \eqref{gain} equals \eqref{loss}. Expressing $\lambda$ using \eqref{munulambda}, dividing by $\mu(dq)$, and letting $dq\to\{q\}$, we obtain that conservation of probability is equivalent to
\be\label{gainloss}
\int\limits_{f^{-1}(q)} \!\!\! \nu_q(dq') \, \tfrac{2}{\hbar}\, \Im \Bigl(\psi(q) \Big| N(q')^\dagger \psi(q') \Bigr)_{\!q}
= -\!\!\! \int\limits_{f^{-1}(q)}\!\!\! \nu_q(dq') \, j^n(q')
\ee
for all $q$. Now dropping the integration relative to $\nu_q(dq')$, interchanging the names $q\leftrightarrow q'$, and using $j^n=(\psi|A^n\psi)$, we obtain that local conservation of probability means that, for every $q\in\partial\Q$,
\be\label{sufficient1}
 \tfrac{2}{\hbar}\, \Im \Bigl(\psi(f(q)) \Big| N(q)^\dagger \psi(q) \Bigr)_{\!f(q)} = - \Bigl( \psi(q) \Big| A^n(q)\, \psi(q) \Bigr)_{\!q}\,.
\ee

We introduce a notation adapted to this situation: Let $\psi_{\pm}=P^{\pm}\psi(q)$, $\psi_0=P^0\psi(q)$, $\psi_*=\psi(f(q))$, $N_{\pm}=P^{\pm}N(q)$, $N_0=P^0N(q)$,  
\be
\tpsi= \begin{pmatrix}\psi_+\\\psi_- \\ \psi_0\\ \psi_*\end{pmatrix} \in \tE_{q}= E_{q}\oplus E_{f(q)}
\ee
with $\oplus$ meaning orthogonal sum, and
\be\label{tAdef}\def\arraystretch{1.8}
\tA = \left[ \begin{array}{c|c|c|c} 
A^+ &0&0& \tfrac{i}{\hbar}N_+\\[1mm]\hline
0&A^-&0&\tfrac{i}{\hbar}N_-\\[1mm]\hline
0&0&0&\tfrac{i}{\hbar}N_0\\[1mm]\hline
-\tfrac{i}{\hbar} N_+^\dagger & -\tfrac{i}{\hbar} N_-^\dagger & -\tfrac{i}{\hbar} N_0^\dagger &0
\end{array} \right] \,.
\ee
Then the condition \eqref{sufficient1} for local conservation of probability can be abbreviated as
\be\label{tsufficient1}
\tpsi^\dagger \tA \tpsi =0\,.
\ee
Let $\tKegel$ denote the set of $\tpsi\in\tE$ satisfying \eqref{tsufficient1}. To ensure conservation of probability, we need that $\tS_{q} \subseteq \tKegel$, but for self-adjointness of the Hamiltonian, we need a bit more, for essentially the same reasons as discussed around Conjecture~\ref{conj:refl} in Section~\ref{sec:Lagrangian}.

\begin{conj}\label{conj:IBC}
Let $\tS_q\subseteq \tE_q$ be a subbundle. The interior--boundary condition
\be\label{IBCconj}
\tpsi(q) := \bigl( \psi(q), \psi(f(q)) \bigr) \in \tS_q \quad \forall q\in\partial \Q
\ee
can occur in a self-adjoint extension (from $C_c^\infty(E|_{\Q^\circ})$) of $H$ as in \eqref{Hdef} in $\Lz(E)$ with local conservation of probability as in \eqref{sufficient1} iff $\tS_q$ is a complete Lagrangian subspace of $\tE_q$ relative to $\tA$ for every $q\in \partial \Q$.
\end{conj}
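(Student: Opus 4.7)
The plan is to mirror the strategy that motivated Conjecture~\ref{conj:refl} and Proposition~\ref{prop:refl} in Section~\ref{sec:Lagrangian}, with $\tE_q$ and $\tA$ now playing the roles formerly played by $E_q$ and $A^n$. The algebraic content follows from one integration-by-parts identity combined with Proposition~\ref{prop:refl} applied to $\tA$; constructing a genuine self-adjoint extension is the hard analytic step.

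First I would compute $\scp{\phi}{H\psi}-\scp{H\phi}{\psi}$ for $\phi,\psi$ sufficiently regular and satisfying \eqref{IBCconj}. The differential part of \eqref{Hdef} produces the boundary term $i\hbar\int_{\partial\Q}\lambda(dq)\,(\phi(q)|A^n(q)\psi(q))_q$ exactly as in Section~\ref{sec:known}. The non-local term in \eqref{Hdef} contributes a double integral over $\Q$ and $f^{-1}(q)$ which, via \eqref{munulambda}, collapses to a single boundary integral; after subtracting the analogous expression coming from $\scp{H\phi}{\psi}$, it becomes
\be
\int_{\partial\Q}\lambda(dq)\,\bigl[(N(q)\phi(f(q))|\psi(q))_q-(\phi(q)|N(q)\psi(f(q)))_q\bigr].
\ee
A direct inspection of the block structure \eqref{tAdef}, whose Hermiticity is built in by the adjoint pairing of the off-diagonal $N$-blocks, shows that the two contributions combine to
\be\label{keyid}
\scp{\phi}{H\psi}-\scp{H\phi}{\psi}=i\hbar\int_{\partial\Q}\lambda(dq)\,\tphi(q)^\dagger\,\tA(q)\,\tpsi(q),
\ee
where $\tphi,\tpsi$ are assembled from $\phi,\psi$ exactly as in the definition preceding \eqref{tAdef}.

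For the ``only if'' direction, suppose $H$ admits a self-adjoint extension on a domain $\mathrm{dom}(H)$ characterized by \eqref{IBCconj}. Symmetry forces the integrand in \eqref{keyid} to vanish for all $\phi,\psi\in\mathrm{dom}(H)$, and the arbitrariness of the boundary traces yields $\tphi(q)^\dagger\tA(q)\tpsi(q)=0$ whenever $\tphi(q),\tpsi(q)\in\tS_q$ for $\lambda$-a.e.\ $q$; this is $\tS_q\subseteq\tS_q^\#$, and in the diagonal case $\tphi=\tpsi$ it is the local conservation law \eqref{sufficient1}. Conversely, any $\tphi$ in the domain of $H^*$ that is sufficiently regular on each sector satisfies $\tphi(q)\in\tS_q^\#$ at $\lambda$-a.e.\ $q$; since self-adjointness requires $\mathrm{dom}(H^*)=\mathrm{dom}(H)$, we must have $\tS_q^\#\subseteq\tS_q$. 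Together these give $\tS_q=\tS_q^\#$, the complete Lagrangian property. Proposition~\ref{prop:refl} applied to the self-adjoint endomorphism $\tA$ on the finite-dimensional space $\tE_q$ then both guarantees that such $\tS_q$ exist and classifies them via unitaries between the positive and negative spectral subspaces of $\tA$.

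For the ``if'' direction one must construct a genuine self-adjoint realization of $H$ on
\be
\domain=\Bigl\{\psi\in\bigoplus_n H^1(\Q^{(n)},E^{(n)}):(\psi(q),\psi(f(q)))\in\tS_q\text{ for $\lambda$-a.e.\ }q\in\partial\Q\Bigr\}.
\ee
Symmetry on $\domain$ is immediate from \eqref{keyid}. The remaining task is the range condition $\mathrm{Range}(H\pm i)=\Hilbert$, which I would attack along the lines of Appendix~\ref{app:sa}: pose $(H\pm i)\psi=g$ as a transmission problem between sectors, use $H^{1/2}$-trace theory to give meaning to the IBC, and read off existence and uniqueness from the invertibility encoded in the parametrization of $\tS_q$ by a unitary between the spectral subspaces of $\tA$. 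This is the main obstacle: for general $(\Q,f,E,N)$ (e.g.\ codimension-$3$ creation boundaries, unbounded or non-smooth $\partial\Q$, or unbounded $N$) the trace and elliptic-boundary machinery that Appendix~\ref{app:sa} inherits from the literature is not directly available, and additional regularity hypotheses together with a more delicate extension-theoretic argument (in the spirit of boundary-triple theory) appear to be needed. This is precisely why the statement is posed as a conjecture rather than a theorem.
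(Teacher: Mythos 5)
The statement you were given is posed in the paper as a \emph{conjecture}, and the paper contains no proof of it --- only the heuristic justification offered around Conjecture~\ref{conj:refl}, which the authors invoke for the IBC case by the phrase ``for essentially the same reasons.'' Your proposal is, in substance, exactly that heuristic carried out explicitly, and it is consistent with everything the paper does establish. In particular, your boundary identity for $\scp{\phi}{H\psi}-\scp{H\phi}{\psi}$ is correct: multiplying $\tphi^\dagger\tA\tpsi$ by $i\hbar$ and expanding via the block form \eqref{tAdef} yields $i\hbar\bigl(\phi\big|A^n\psi\bigr)_q-\bigl(\phi(q)\big|N\psi_*\bigr)_q+\bigl(N\phi_*\big|\psi(q)\bigr)_q$, which is precisely the integrated-by-parts differential term plus the two cross terms produced by the non-local part of \eqref{Hdef} once $\mu\otimes\nu_q$ is converted to $\lambda$ via \eqref{munulambda}; its diagonal specialization $\tphi=\tpsi$ reproduces \eqref{tsufficient1} and hence \eqref{sufficient1}. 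Your ``only if'' argument is the same symmetry-plus-adjoint-domain reasoning the paper gives, and your admission that the ``if'' direction --- actually constructing the self-adjoint realization and verifying the range condition for general $(\Q,f,E,N)$ --- is the open analytic content is exactly why the statement is a conjecture rather than a theorem (the paper proves it only for the concrete model of Theorem~\ref{thm:ex2} in Appendix~\ref{app:sa}). The one place where the paper goes further than you is Proposition~\ref{prop:ibc}: instead of citing Proposition~\ref{prop:refl} directly for $\tA$, it conjugates $\tA$ by the upper-triangular map $\tF$ of \eqref{tFdef} to block-diagonalize it, which yields the explicit, usable parametrization \eqref{tStL} of the complete Lagrangian subspaces in terms of the auxiliary operator $\hA$ on $\hE=E^0_q\oplus E_{f(q)}$; your direct appeal to Proposition~\ref{prop:refl} gives existence and an abstract classification by unitaries between the spectral subspaces of $\tA$, but not that explicit form. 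In short: nothing in your proposal is wrong, but neither you nor the paper proves the conjecture; what you have written is the plausibility argument, correctly executed and at the same level of rigor as the paper's own discussion.
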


Put differently, we suggest that the complete Lagrangian property of $\tS_q$ is the algebraic (or formal) condition relevant to the self-adjointness of $H$ with IBC \eqref{IBCconj} with local conservation of probability.

\begin{prop}\label{prop:ibc}
Let $\hE:= E^0_q \oplus E_{f(q)}$, $\hA:\hE \to \hE$ the endomorphism given by
\be\label{hAdef}\def\arraystretch{1.5}
\hA = \left[ \begin{array}{c|c}
0& \tfrac{i}{\hbar} N_0 \\[1mm]\hline
-\tfrac{i}{\hbar} N_0^\dagger & -\tfrac{1}{\hbar^2} N^\dagger A^{\mathrm{inv}}N
\end{array} \right]\,,
\ee
$\hP^0:=1_{\{0\}}(\hA)$, $\hP^+:=1_{(0,\infty)}(\hA)$, and $\hP^-:=1_{(-\infty,0)}(\hA)$; let $\hE^0$, $\hE^+$, $\hE^-$ be their respective ranges, and let $\hA^{\pm}$ be the restriction of $\hA$ to $\hE^{\pm}$. The complete Lagrangian subspaces $\tS_q$ of $\tE_q$ relative to $\tA$ are in a natural bijective relation to the unitary isomorphisms
\be\label{tL}
\tL: E^+\oplus\hE^+\to E^-\oplus\hE^-\,, 
\ee
given by
\begin{multline}\label{tStL}
\tS_q = \Biggl\{ \tpsi\in\tE_q: 
\begin{pmatrix} (-A^-)^{1/2} \psi_- -\tfrac{i}{\hbar} (-A^-)^{-1/2} N_- \psi_*\\ (-\hA^-)^{1/2} \hP^- \tpsi \end{pmatrix} = \\
\tL \begin{pmatrix} (A^+)^{1/2} \psi_+ + \tfrac{i}{\hbar} (A^+)^{-1/2} N_+ \psi_* \\ (\hA^+)^{1/2} \hP^+ \tpsi \end{pmatrix} \Biggr\}\,.
\end{multline}
In particular, complete Lagrangian subspaces exist iff
\be\label{dimh+-}
\dim E^+ + \dim\hE^+= \dim E^- +\dim \hE^-\,.
\ee
\end{prop}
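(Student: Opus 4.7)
The plan is to reduce Proposition~\ref{prop:ibc} to Proposition~\ref{prop:refl} by a completion-of-squares change of variables on $\tE_q$ that block-diagonalizes $\tA$, decoupling a reflecting-type block on $E^+\oplus E^-$ from the residual block $\hA$ on $\hE=E^0\oplus E_{f(q)}$.

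First I would introduce the linear isomorphism $T:\tE_q\to E^+\oplus E^-\oplus \hE$ defined by $T\tpsi=(\psi_+',\psi_-',\psi_0,\psi_*)$ with
\begin{align*}
\psi_+' &:= (A^+)^{1/2}\psi_+ + \tfrac{i}{\hbar}(A^+)^{-1/2} N_+\psi_*,\\
\psi_-' &:= (-A^-)^{1/2}\psi_- - \tfrac{i}{\hbar}(-A^-)^{-1/2} N_-\psi_*,
\end{align*}
which is invertible because $A^{\pm}$ are invertible on $E^{\pm}$. A direct completion-of-squares computation, using that $N^\dagger A^{\mathrm{inv}} N = N_+^\dagger(A^+)^{-1}N_+ + N_-^\dagger(A^-)^{-1}N_-$ by the definition \eqref{Ainvdef}, shows that
\[
\tpsi^\dagger \tA \tpsi \;=\; |\psi_+'|^2 \,-\, |\psi_-'|^2 \,+\, \hpsi^\dagger \hA \hpsi,
\]
where $\hpsi=(\psi_0,\psi_*)\in\hE$ and $\hA$ is precisely the endomorphism \eqref{hAdef}. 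Polarizing this identity yields $\bigl(\tpsi_1\big|\tA\tpsi_2\bigr)=\bigl(T\tpsi_1\big|A_W T\tpsi_2\bigr)$ for all $\tpsi_1,\tpsi_2$, where $A_W:=I_{E^+}\oplus(-I_{E^-})\oplus\hA$ on $W:=E^+\oplus E^-\oplus\hE$.

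Since $T$ is a linear isomorphism intertwining the two sesquilinear forms, the map $\tS_q\mapsto T(\tS_q)$ is a bijection between complete Lagrangian subspaces of $\tE_q$ relative to $\tA$ and complete Lagrangian subspaces of $W$ relative to $A_W$ (for any linear iso $T$ with $\langle\cdot,A\cdot\rangle = \langle T\cdot,A'T\cdot\rangle$, one checks $T(S^{\#})=T(S)^{\#}$ from the definitions). I would then apply Proposition~\ref{prop:refl} to the self-adjoint endomorphism $A_W$: because $A_W$ is block-diagonal, its positive and negative spectral subspaces are $E^+\oplus\hE^+$ and $E^-\oplus\hE^-$ respectively, and its kernel is $\hE^0$; the restrictions are $I_{E^+}\oplus\hA^+$ and $-I_{E^-}\oplus\hA^-$. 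Proposition~\ref{prop:refl} therefore puts these complete Lagrangian subspaces in bijection with unitaries $\tL:E^+\oplus\hE^+\to E^-\oplus\hE^-$, and gives existence precisely under the dimension condition \eqref{dimh+-}. Pulling back through $T^{-1}$ by substituting the definitions of $\psi_\pm'$ into Proposition~\ref{prop:refl}'s parametrization \eqref{SL} applied to $A_W$ yields the explicit description \eqref{tStL}.

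I expect the main technical step to be the verification of the completion-of-squares identity displayed above: one must check that the $\psi_*$-quadratic terms produced by absorbing the $N_{\pm}$ cross-couplings into $|\psi_{\pm}'|^2$ combine into exactly $-\tfrac{1}{\hbar^2}\psi_*^\dagger N^\dagger A^{\mathrm{inv}} N\psi_*$ and merge with the surviving $N_0$ off-diagonal blocks to reproduce $\hA$. Once this bookkeeping is in place, everything else is a formal transport of the already-proved Proposition~\ref{prop:refl} along the isomorphism $T$.
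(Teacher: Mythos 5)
Your proposal is correct and follows essentially the same route as the paper's proof: a completion-of-squares change of variables that block-diagonalizes $\tA$ into a direct sum of a piece on $E^+\oplus E^-$ and the residual block $\hA$ on $\hE$, followed by an application of Proposition~\ref{prop:refl} and a pull-back of its parametrization. The only (cosmetic) difference is that you absorb the factors $(A^{\pm})^{1/2}$ into your isomorphism $T$ so that the decoupled form becomes $I_{E^+}\oplus(-I_{E^-})\oplus\hA$, whereas the paper's map $\tF$ leaves $A^{\pm}$ as the diagonal blocks of $A'=\tF^\dagger\tA\tF$.
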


We give the proof in Appendix~\ref{app:proofs}. Note that, in particular, every $\tS_q$ of the form \eqref{tStL} is a subset of $\tKegel$, which implies that probability is conserved, i.e.,
\be\label{probcons}
\partial_t |\psi(q)|^2
= -\sum_{a=1}^{d_n} \nabla_{\!a} j^a(q) 
 -\!\!\! \int\limits_{f^{-1}(q)}\!\!\! \nu_q(dq') \, j^n(q') \,.
\ee

Here is how the IBC \eqref{IBC} of Section~\ref{sec:ibc} is included in Proposition~\ref{prop:ibc} as a special case: If, as demanded in \eqref{Ncond1} and \eqref{Ncond2}, $N_0=0$ and $N^\dagger A^{\mathrm{inv}} N=0$, then $\hA=0$, so $\hP^+=0=\hP^-$, and $\tL$ reduces to a unitary isomorphism $L:E^+\to E^-$ as in the case of reflecting boundary conditions. The equation inside the set brackets defining $\tS$ in \eqref{tStL} becomes
\be
(-A^-)^{1/2} \psi_- -\tfrac{i}{\hbar} (-A^-)^{-1/2} N_- \psi_*=
L\Bigl( (A^+)^{1/2} \psi_+ + \tfrac{i}{\hbar} (A^+)^{-1/2} N_+ \psi_*\Bigr)\,,
\ee
which is equivalent to \eqref{IBC}.

A final remark. Could there be further boundary conditions that conserve probability? One could imagine that \eqref{gainloss} is satisfied but \eqref{sufficient1} is not; that is, that the integrals in \eqref{gainloss} are equal but the integrands are not. When visualized using Bohmian trajectories, such a situation presumably corresponds to jumps from boundary points to other boundary points, instead of from one sector to another. This case can occur in the situation of Theorem~\ref{thm:ex2} and is relevant, e.g., to $\delta$ potentials concentrated on a surface $S$ in $\RRR^3$ \cite{DES89,B16}, where the two sides of $S$ are regarded as two separated boundaries but a transition from one side to the same point on the other side is possible, while the values of $\psi$ on the two sides are related by a transmission condition. However, this case will not be considered in this section.

\subsection{Upshot}
\label{sec:summary}

Let us collect the central equations. 
The Hamiltonian acts according to
\be\label{Hdefr}
H\psi(q) = -i\hbar\sum_{a=1}^{d_n} A^a(q) \nabla_{\!a}\psi(q) + B(q)\psi(q)
+ \!\!\! \int\limits_{f^{-1}(q)} \!\!\! \nu_q(dq') \, N(q')^\dagger\, \psi(q')\,,
\ee
where $\nu_q$ can be thought of as the uniform measure over $f^{-1}(q)$, $N(q):E_{f(q)}\to E_q$ is a given field of linear mappings, $A^a(q)$ is self-adjoint, and the skew-adjoint part of $B(q)$ is $-\tfrac{\hbar}{2}\nabla_{\!a}A^a(q)$.

The general form of the IBC that we have derived (and that we conjecture to be the most general possible IBC with local conservation of probability for Dirac-type equations) is
\be\label{IBCg}
\begin{pmatrix} (-A^-)^{1/2} P^-\psi_q -\tfrac{i}{\hbar} (-A^-)^{-1/2} P^-N \psi_{f(q)}\\ (-\hA^-)^{1/2} \hP^- (P^0\psi_q,\psi_{f(q)}) \end{pmatrix} = 
\tL \begin{pmatrix} (A^+)^{1/2} P^+ \psi_q + \tfrac{i}{\hbar} (A^+)^{-1/2} P^+N \psi_{f(q)} \\ (\hA^+)^{1/2} \hP^+ (P^0\psi_q,\psi_{f(q)}) \end{pmatrix}
\ee
at every boundary point $q$, with a fixed unitary isomorphism $\tL$. Here, $P^+=1_{(0,\infty)}(A^n)$, $A^n= n(q) \cdot A(q)$, $P^-=1_{(-\infty,0)}(A^n)$, $A^\pm = A|_{\text{range }P^\pm}$, $A^{\mathrm{inv}}= (A^+)^{-1}P^+ + (A^-)^{-1} P^-$ and likewise with the ``hatted'' mappings obtained from $\hA$, the endomorphism of $\hE:= E^0_q \oplus E_{f(q)}$ given by 
\be\label{hAdef2}\def\arraystretch{1.5}
\hA = \left[ \begin{array}{c|c}
0& \tfrac{i}{\hbar} P^0 N \\[1mm]\hline
-\tfrac{i}{\hbar} (P^0N)^\dagger & -\tfrac{1}{\hbar^2} N^\dagger A^{\mathrm{inv}}N
\end{array} \right].
\ee

In case $\hA=0$, the IBC \eqref{IBCg} reduces to
\be\label{IBCr}
R(q) \, \psi(q) = -\tfrac{i}{\hbar} \, R(q) \, A^{\mathrm{inv}}(q)\, N(q) \, \psi\bigl(f(q)\bigr)
\ee
at every boundary point $q$. Here, $R$ is defined as
\be
R=\sqrt{-A^-}P^- - L\sqrt{A^+}P^+\,,
\ee
where $L:\text{range }P^+ \to \text{range }P^-$ is a fixed unitary isomorphism.

\appendix

\section{Proof of Self-Adjointness}
\label{app:sa}

In this appendix, we prove Theorem~\ref{thm:ex2}. Let $\Omega \subset \RRR^3$ be a domain with compact $C^2$-boundary $\Sigma:=\partial \Omega$ and let $\vn(\vx)$ be the inward pointing unit normal vector field. Our Hilbert space is $\Hilbert = \Hilbert_0 \oplus \Hilbert_1 := \CCC \oplus \Lz(\Omega, \CCC^4)$. From now on, we will write $\Lz(M)$ instead of $\Lz(M, \CCC^4)$ for any measurable $M \subset \RRR^n$. We denote for any $s \in \RRR$ by $H^s(M) = H^s(M, \CCC^4)$ the corresponding vectorial Sobolev spaces. Let $D:= - i \valpha \cdot \nabla + m \beta$ denote the action of the Dirac operator (for $\hbar =1$) and let $N \in H^{1/2}(\Sigma)$ be a spinor field with the property \eqref{Nalphaex2}, i.e.,
\begin{align}
\label{eq:propofN}
\langle N , \alpha^{\vn} N \rangle_{\Lz(\Sigma)}= 0 \, .
\end{align}
The matrix $\B(\vx):=- i \beta \alpha^{\vn}(\vx)= - i \gamma^{\vn}(\vx)$ is Hermitian with eigenvalues $\pm 1$. We define $P_+ := \frac{1+\B}{2}$ and $P_- := \frac{1-\B}{2}$, the (orthogonal) projections onto the eigenspaces. In the following, $\psi_0$ and $\psi_1$ will be the components of $\psi$ in $\Hilbert_0$ and $\Hilbert_1$, and the \emph{trace} of $\psi_1$ will mean the restriction of $\psi_1$ to $\Sigma$ (the common terminology in connection with the Sobolev embedding theorem) and be denoted by $\tr\,\psi_1$. The IBC \eqref{ex2ibc} can equivalently be rewritten as
\be\label{ex2ibc2}
P_-   \tr\, \psi_1=- i P_- \alpha^{\vn}  N \psi_0\,.
\ee

Consider the operator $H$ with action 
\be
H (\psi_0, \psi_1) := \bigl(\langle N, \tr \psi_1\rangle_{\Lz(\Sigma)}, D \psi_1 \bigr)
\ee
on the domain 
\begin{align}
\label{eq:defofD}
\domain(H)=\Bigl\{ \psi \in \CCC \oplus \Lz(\Omega) : \psi_1 \in H^1(\Omega), \, \eqref{ex2ibc2}~\text{holds a.e.\ on }\Sigma\Bigr\} \, .
\end{align}  
The operator $H$ is well defined on $\domain(H)$ because the Dirac operator $D$ maps $H^1(\Omega)$ to $\Lz(\Omega)$ and the trace operator $\tr$ maps $H^1(\Omega)$ to $\Lz(\Sigma)$ by the Sobolev embedding theorem.

\begin{prop}
$\domain(H)$ is dense in $\Hilbert$.
\end{prop}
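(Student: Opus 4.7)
The plan is to approximate any $\phi=(\phi_0,\phi_1)\in\Hilbert$ within arbitrary $\epsilon>0$ by an element $\psi\in\domain(H)$. The idea is to leave the zero-sector component unchanged ($\psi_0:=\phi_0$) and to replace $\phi_1$ by a compactly supported smooth approximation plus a small $H^1$-correction, supported in a thin tubular neighborhood of $\Sigma$, whose trace exactly enforces the IBC \eqref{ex2ibc2}. Since the IBC constrains only the $P_-$-part of $\tr\psi_1$, such a correction can be built from the trace theorem.

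Concretely, first pick $\chi\in C_c^\infty(\Omega,\CCC^4)$ with $\|\phi_1-\chi\|_{\Lz(\Omega)}<\epsilon/2$; note $\tr\,\chi=0$, so $\chi$ alone would satisfy the IBC only in the degenerate case $\phi_0=0$. Next, set $g:=-i\,P_-\alpha^{\vn} N\phi_0$, the right-hand side of \eqref{ex2ibc2} with $\psi_0$ replaced by $\phi_0$. Since $\partial\Omega$ is $C^2$, the unit normal $\vn$, and hence the matrix-valued functions $\alpha^{\vn}$ and $P_-=\tfrac12(1-\B)$, depend $C^1$-smoothly on $\vx\in\Sigma$; multiplication by such coefficients is bounded on $H^{1/2}(\Sigma)$, so $N\in H^{1/2}(\Sigma)$ implies $g\in H^{1/2}(\Sigma,\CCC^4)$. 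By the standard trace/extension theorem for a domain with compact $C^2$ boundary, there exists $\eta_0\in H^1(\Omega,\CCC^4)$ with $\tr\,\eta_0=g$. Let $\chi_\delta\in C^\infty(\overline{\Omega})$ be a smooth cutoff equal to $1$ within distance $\delta/2$ of $\Sigma$ and supported in the $\delta$-tubular neighborhood of $\Sigma$, and put $\eta_\delta:=\chi_\delta\,\eta_0\in H^1(\Omega)$. Then $\tr\,\eta_\delta=g$, while dominated convergence gives $\|\eta_\delta\|_{\Lz(\Omega)}\to 0$ as $\delta\to 0$, so we can choose $\delta$ with $\|\eta_\delta\|_{\Lz}<\epsilon/2$.

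Finally, set $\psi_1:=\chi+\eta_\delta$. Then $\psi_1\in H^1(\Omega)$ and $P_-\,\tr\,\psi_1=P_- g=g$ (the last equality because $g$ already lies in the range of $P_-$), so $\psi:=(\phi_0,\psi_1)\in\domain(H)$, and the triangle inequality yields $\|\phi-\psi\|_{\Hilbert}\leq\|\phi_1-\chi\|_{\Lz}+\|\eta_\delta\|_{\Lz}<\epsilon$. The only point requiring real care is verifying $g\in H^{1/2}(\Sigma)$, which reduces to the mapping property of multiplication by $C^1$ matrix-valued functions on $H^{1/2}$ of a compact $C^2$ surface; the remainder is a routine combination of the trace/extension theorem with a cutoff supported in a shrinking collar of $\Sigma$.
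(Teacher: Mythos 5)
Your proof is correct and follows essentially the same route as the paper's: both use the surjectivity of the trace operator from $H^1(\Omega)$ onto $H^{1/2}(\Sigma)$ (together with $N\in H^{1/2}(\Sigma)$ and the boundedness of multiplication by $\alpha^{\vn}$ and $P_-$) to build an $H^1$ extension enforcing the IBC, and then add $C_0^\infty(\Omega)$ functions to obtain density. The only cosmetic difference is that you shrink the $L^2$ norm of the extension with a collar cutoff, whereas the paper simply observes that a translate of the dense set $C_0^\infty(\Omega)$ by the fixed extension $E(-i\alpha^{\vn}N\psi_0)$ is still dense.
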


\begin{proof}
The trace operator $\tr$ extends to a continuous operator from $H^1(\Omega)$ \textit{onto} $H^{1/2}(\Sigma)$, that is, there exists an extension operator $E:H^{1/2}(\Sigma) \to H^1(\Omega)$ such that $\tr \circ E = 1_{H^{1/2}(\Sigma)}$, see \cite[Prop.~1.1]{OV18}. For any $\psi_0 \in \CCC$ we set $\mathcal{D}_{\psi_0} := \{ \psi_1  = E(- i \alpha^{\vn} N \psi_0)+ f \vert f \in C_0^\infty(\Omega) \}$. Elements in $\mathcal{D}_{\psi_0}$ lie in the first Sobolev space and fulfill the interior-boundary condition \eqref{ex2ibc2}. The set $\mathcal{D}_{\psi_0}$ is dense in $\Hilbert_1$ for any $\psi_0 \in \Hilbert_0$. Thus, $\domain(H)$ is dense in $\Hilbert$.
\end{proof}

Let $\mathfrak{D}(\Omega):= \{\psi \in \Lz(\Omega) \vert D \psi \in \Lz(\Omega) \}$ be the domain of the maximal Dirac operator. It is known \cite[2.1]{OV18} that $\tr$ can be extended to an operator from $\mathfrak{D}(\Omega)$ into $H^{-1/2}(\Sigma)$, and that for $\phi \in \mathfrak{D}(\Omega)$ and $\psi \in H^1(\Omega)$, the following (dual) Green's identity holds \cite[2.15]{OV18}:
\begin{align}
\label{eq:greens}
\langle \phi_1, D \psi_1 \rangle_{\Hilbert_1} - \langle D \phi_1,  \psi_1 \rangle_{\Hilbert_1}  = i \langle \tr \phi_1 , \alpha^{\vn} \tr \psi_1 \rangle_{H^{-1/2}(\Sigma),H^{1/2}(\Sigma)} \, .
\end{align}
Here, $\langle \cdot, \cdot \rangle_{H^{-1/2}(\Sigma),H^{1/2}(\Sigma)} =: \langle \cdot, \cdot \rangle_{-\frac 1 2, \frac 1 2}$ denotes the pairing between elements of $H^{1/2}(\Sigma)$ and its dual space $H^{-1/2}(\Sigma)$. 
Because $\beta \alpha^{\vn} = - \alpha^{\vn} \beta$, we have that $P_+ \alpha^{\vn} = \alpha^{\vn} P_-$. Using this, we can rewrite \eqref{eq:greens} as
\begin{align}
\langle& \phi_1, D \psi_1 \rangle_{\Hilbert_1} - \langle D \phi_1,  \psi_1 \rangle_{\Hilbert_1}  = i  \langle \tr \phi_1 , \alpha^{\vn} \tr \psi_1 \rangle_{{-\frac 1  2},{\frac 1  2}}
 \nonumber \\
 &
  =  i \langle P_- \tr \phi_1 ,  \alpha^{\vn} P_+ \tr \psi_1 \rangle_{H^{-1/2},H^{1/2}}  +  i \langle P_+ \tr \phi_1 ,  \alpha^{\vn} P_- \tr \psi_1 \rangle_{{-\frac 1  2},{\frac 1  2}}
  \nonumber  \\ \label{eq:boundaryform}
 &
  = \langle P_+ \tr \phi_1 ,   i \alpha^{\vn} P_- \tr \psi_1 \rangle_{{-\frac 1  2},{\frac 1  2}}  - \langle i \alpha^{\vn} P_- \tr \phi_1 ,  P_+ \tr \psi_1 \rangle_{{-\frac 1  2},{\frac 1  2}}    \, .
\end{align}

\begin{prop}\label{prop:sym}
The operator $(H,\domain(H))$ is symmetric.
\end{prop}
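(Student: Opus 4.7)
The plan is the direct computation of $\langle \phi, H\psi\rangle_\Hilbert - \langle H\phi, \psi\rangle_\Hilbert$ for arbitrary $\phi,\psi\in\domain(H)$ and showing that it vanishes. Splitting according to $\Hilbert=\Hilbert_0\oplus\Hilbert_1$, this difference equals
\begin{align*}
\overline{\phi_0}\,\langle N,\tr\psi_1\rangle_{\Lz(\Sigma)} - \langle\tr\phi_1, N\rangle_{\Lz(\Sigma)}\,\psi_0
+ \bigl[\langle\phi_1,D\psi_1\rangle_{\Hilbert_1}-\langle D\phi_1,\psi_1\rangle_{\Hilbert_1}\bigr].
\end{align*}
For the bracketed term I would invoke the (dual) Green's identity \eqref{eq:boundaryform}, which rewrites it as the sum of two boundary pairings involving $P_\pm\tr\phi_1$, $P_\pm\tr\psi_1$ and the factor $i\alpha^{\vn}$.

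The next step is to plug in the IBC \eqref{ex2ibc2} for both $\phi$ and $\psi$: $P_-\tr\psi_1=-iP_-\alpha^{\vn}N\psi_0$ and similarly for $\phi$. Using the anticommutation $\{\beta,\alpha^{\vn}\}=0$, equivalently $P_\pm\alpha^{\vn}=\alpha^{\vn}P_\mp$, together with $(\alpha^{\vn})^2=I$, I compute $i\alpha^{\vn}P_-\tr\psi_1 = P_+N\psi_0$ and likewise for $\phi$. Inserting these in the boundary pairings, the Dirac boundary contribution becomes
\[
\psi_0\,\langle \tr\phi_1, P_+N\rangle_{\Lz(\Sigma)} - \overline{\phi_0}\,\langle P_+N,\tr\psi_1\rangle_{\Lz(\Sigma)}.
\]
Adding this to the two ``zero-sector'' terms and writing $N = P_+N + P_-N$, the $P_+N$ pieces cancel and only
\[
\overline{\phi_0}\,\langle P_-N,\tr\psi_1\rangle - \psi_0\,\langle\tr\phi_1,P_-N\rangle
\]
remains.

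Now I would apply the IBC once more: since $P_-$ is an orthogonal projection, $\langle P_-N,\tr\psi_1\rangle = \langle P_-N, P_-\tr\psi_1\rangle = -i\psi_0 \langle P_-N,\alpha^{\vn}P_+N\rangle$, and similarly $\langle\tr\phi_1,P_-N\rangle = i\overline{\phi_0}\,\overline{\langle P_-N,\alpha^{\vn}P_+N\rangle}$. Thus the remainder equals
\[
-2i\,\overline{\phi_0}\psi_0\,\Re\langle P_-N,\alpha^{\vn}P_+N\rangle_{\Lz(\Sigma)}.
\]
To finish, I would observe that the hypothesis \eqref{eq:propofN} together with the decomposition $N=P_+N+P_-N$ and $P_\pm\alpha^{\vn}=\alpha^{\vn}P_\mp$ forces $\langle P_\pm N,\alpha^{\vn}P_\pm N\rangle=0$, so \eqref{eq:propofN} reduces to $\langle P_-N,\alpha^{\vn}P_+N\rangle + \overline{\langle P_-N,\alpha^{\vn}P_+N\rangle}=0$, i.e.\ exactly the vanishing of the real part appearing above.

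The step I expect to require the most care is the book-keeping at the boundary: one must keep track of the $P_\pm$'s, use $(\alpha^{\vn})^2=I$ and $P_+\alpha^{\vn}=\alpha^{\vn}P_-$ in the right places, and make sure that the pairings are legitimately $\Lz(\Sigma)$ inner products (which they are here because $\tr\phi_1,\tr\psi_1\in H^{1/2}(\Sigma)$ and $N\in H^{1/2}(\Sigma)$, so all objects live in $\Lz(\Sigma)$ and no $H^{-1/2}$/$H^{1/2}$ duality subtlety is needed beyond what \eqref{eq:boundaryform} already provides). The elegant feature is that the \emph{weak} hypothesis \eqref{Nalphaex2} on $N$ is used only globally, in the very last line, which is precisely why global rather than local probability conservation suffices for symmetry.
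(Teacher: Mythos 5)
Your proof is correct and is precisely the ``straightforward computation starting from \eqref{eq:boundaryform} using the interior-boundary condition \eqref{ex2ibc2} and the assumption \eqref{eq:propofN}'' that the paper's one-line proof alludes to but does not write out; every step (the identity $i\alpha^{\vn}P_-\tr\psi_1=P_+N\psi_0$, the cancellation of the $P_+N$ pieces, and the reduction of \eqref{eq:propofN} to $\Re\langle P_-N,\alpha^{\vn}P_+N\rangle_{\Lz(\Sigma)}=0$ via the vanishing of the diagonal terms) checks out, and your observation that all pairings are genuine $\Lz(\Sigma)$ inner products for $\phi,\psi\in\domain(H)$ is the right justification for applying \eqref{eq:boundaryform} without duality subtleties.
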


\begin{proof}
This is a straightforward computation starting from~\eqref{eq:boundaryform} using the interior-boundary condition \eqref{ex2ibc2} and the assumption~\eqref{eq:propofN}.
\end{proof}

Now we collect some of the tools that we plan to use in the proof of self-adjointness of $H$ later on.
\begin{itemize}
\item Let $\psi_1 \in \mathfrak{D}(\Omega)$. If $\tr \psi_1 \in H^{1/2}(\Sigma)$, then $\psi_1 \in H^1(\Omega)$. \cite[Prop.~2.16]{OV18}
\item There are two operators on $H^{-1/2}(\Sigma)$, called $C_\pm$, such that $f= C_+(f) + C_-(f)$ for all $f \in H^{-1/2}(\Sigma)$. \cite[Prop.~2.6]{OV18}
\item The operators $C_\pm$ map $H^{1/2}(\Sigma)$ onto itself. \cite[Prop.~2.6]{OV18}
\item The composition $C_- \circ \tr$ maps $\mathfrak{D}(\Omega)$ into $H^{1/2}(\Sigma)$. \cite[Prop.~2.7]{OV18}
\item It holds that $C_+(\mathfrak{B} f) = - \mathfrak{B} \left(C_-(f)+ i \mathcal{A}(f)\right)$ for all $f \in H^{-1/2}(\Sigma)$. Here $\mathcal{A}$ is an operator on $H^{-1/2}(\Sigma)$ that maps into $H^{1/2}(\Sigma)$. \cite[Prop.~2.6,~2.8]{OV18}
\end{itemize}

With the help of these facts, the proof of self-adjointness of $H$ can be carried out using the strategies outlined in \cite{OV18}.

\begin{prop}
\label{prop:sa}
The operator $H$ is self-adjoint.
\end{prop}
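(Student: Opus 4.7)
Since symmetry of $H$ is settled by Proposition~\ref{prop:sym}, it remains to show $\domain(H^*)\subseteq\domain(H)$. Fix $\phi\in\domain(H^*)$ with $H^*\phi=\chi$. Testing $\langle\phi,H\psi\rangle=\langle\chi,\psi\rangle$ against $\psi=(0,f)$, $f\in C_0^\infty(\Omega)$, yields $D\phi_1=\chi_1$ distributionally, so $\phi_1\in\mathfrak{D}(\Omega)$ and $\tr\phi_1\in H^{-1/2}(\Sigma)$ is well defined. Inserting a general $\psi\in\domain(H)$ and applying the extended Green's identity \eqref{eq:greens} converts the adjoint relation into
\begin{equation}\label{sa-plan-bdry}
\overline{\phi_0}\,\langle N,\tr\psi_1\rangle_{\Lz(\Sigma)} + i\,\langle\tr\phi_1,\alpha^{\vn}\tr\psi_1\rangle_{-\frac12,\frac12} = \overline{\chi_0}\,\psi_0,
\end{equation}
which must hold for every admissible pair $(\psi_0,\psi_1)$.

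By surjectivity of $\tr$, the piece $g:=P_+\tr\psi_1\in P_+H^{1/2}(\Sigma)$ may be chosen freely and independently of $\psi_0\in\CCC$. Setting $\psi_0=0$ in \eqref{sa-plan-bdry} and using $\alpha^{\vn}P_+=P_-\alpha^{\vn}$ together with the orthogonality $P_+\perp P_-$, the identity reduces to $\overline{\phi_0}\langle P_+N,g\rangle + i\langle P_-\tr\phi_1,\alpha^{\vn}g\rangle = 0$ for every $g\in P_+H^{1/2}$. The $(H^{-1/2},H^{1/2})$ duality restricted to $P_+$-subspaces then forces
\begin{equation}\label{sa-plan-ibc}
P_-\tr\phi_1 = -i\phi_0\,P_-\alpha^{\vn}N,
\end{equation}
which is precisely the IBC \eqref{ex2ibc2} for $\phi$ and, since $N\in H^{1/2}(\Sigma)$, gives $P_-\tr\phi_1\in H^{1/2}(\Sigma)$. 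Next setting $g=0$ in \eqref{sa-plan-bdry}, substituting \eqref{sa-plan-ibc}, and invoking \eqref{eq:propofN} in the form $\int_\Sigma N^\dagger\alpha^{\vn}N=0$, one reads off $\chi_0=\langle N,\tr\phi_1\rangle_{\Lz(\Sigma)}=(H\phi)_0$. Hence $\chi=H\phi$ as soon as $\phi_1\in H^1(\Omega)$ has been established.

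The crux is therefore the regularity upgrade $\tr\phi_1\in H^{1/2}(\Sigma)$; by the first listed tool from \cite{OV18} this immediately yields $\phi_1\in H^1(\Omega)$, so $\phi\in\domain(H)$. Decompose $\tr\phi_1 = C_+\tr\phi_1 + C_-\tr\phi_1$; the fourth tool ensures $C_-\tr\phi_1\in H^{1/2}$ automatically. For the remaining piece, the algebraic identity $I = \mathfrak{B}+2P_-$ gives
\begin{equation}\label{sa-plan-id}
C_+\tr\phi_1 = C_+\mathfrak{B}\,\tr\phi_1 + 2\,C_+P_-\tr\phi_1.
\end{equation}
The second term lies in $H^{1/2}$ because $P_-\tr\phi_1\in H^{1/2}$ by \eqref{sa-plan-ibc} and $C_+$ preserves $H^{1/2}$. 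The first term is controlled by the smoothing identity $C_+(\mathfrak{B}f)=-\mathfrak{B}(C_-(f)+i\mathcal{A}(f))$: combined with $C_-\tr\phi_1\in H^{1/2}$ and the mapping property $\mathcal{A}:H^{-1/2}\to H^{1/2}$, it implies $C_+\mathfrak{B}\tr\phi_1\in H^{1/2}$. Therefore $\tr\phi_1\in H^{1/2}(\Sigma)$, which closes the argument.

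The main obstacle is exactly this bootstrap: the IBC directly controls only the $P_-$-component of the trace, while $H^{1/2}$-regularity of the full trace must be extracted by combining that partial information with the Calderon projectors $C_\pm$ and the smoothing identity for $C_+\circ\mathfrak{B}$. Everything else—the distributional identification $D\phi_1=\chi_1$, the algebraic reading off of the IBC \eqref{sa-plan-ibc} for $\phi$, and the identification $\chi_0=(H\phi)_0$ via \eqref{eq:propofN}—is routine once the regularity is in hand.
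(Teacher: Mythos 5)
Your proof is correct and follows essentially the same route as the paper: distributional identification of $D\phi_1$, Green's identity plus $P_+$-trace test functions to extract the IBC for $\phi$, the Calder\'on-projector bootstrap to upgrade $\tr\phi_1$ to $H^{1/2}(\Sigma)$, and then the identification of $\chi_0$ via \eqref{eq:propofN}. The only (cosmetic) difference is in the bootstrap, where you apply the smoothing identity to the full trace via $I=\mathfrak{B}+2P_-$ while the paper applies it to $P_+\tr\phi_1$ via $\mathfrak{B}P_+=P_+$.
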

\begin{proof}
Due to Proposition~\ref{prop:sym}, it remains to prove that $(H, \domain(H))$ extends $(H^*,\domain(H^*))$.

Let $\phi \in \domain(H^*)$. By definition, there exists $H^* \phi:=\eta \in \Hilbert$ such that $\langle \phi, H \psi \rangle_\Hilbert = \langle \eta, \psi \rangle_\Hilbert$ for all $\psi \in \domain(H)$.
In a first step, we take $\psi_0=0$ and $\psi_1 = f \in C_0^\infty(\Omega)$. For this choice of $\psi$ it holds that
\be
\langle \eta_1 , f \rangle = \langle \phi_1, D f \rangle \qquad \forall f \in C_0^\infty(\Omega) \, ,
\ee
which is nothing but the definition of the distributional Dirac operator, so $\eta_1 = D \phi_1$ as distributions. Since $\eta_1 \in \Hilbert_1$, we can conclude that $\eta_1 = D \phi_1$ as functions and therefore $\phi_1 \in \mathfrak{D}(\Omega)$. This implies that 
\begin{align}
&\langle \eta_0, \psi_0\rangle_{\Hilbert_0} +  \langle \eta_1, \psi_1\rangle_{\Hilbert_1} = \langle \phi_0, \langle N, \tr\, \psi_1\rangle_{\Lz(\Sigma)} \rangle_{\Hilbert_0} + \langle \phi_1, D \psi_1 \rangle_{\Hilbert_1}\nonumber
\\
\iff& \langle \eta_0, \psi_0\rangle_{\Hilbert_0} = \langle \phi_0, \langle N, \tr\, \psi_1\rangle \rangle_{\Hilbert_0} + \langle \phi_1, D \psi_1 \rangle_{\Hilbert_1} -  \langle D \phi_1, \psi_1\rangle_{\Hilbert_1} \, .
\end{align}
By~\eqref{eq:boundaryform}, we can rewrite this as
\begin{align}
& \langle \phi_0, \langle N, \tr \psi_1\rangle \rangle_{\Hilbert_0}
 + \langle P_+ \tr \phi_1 ,   i \alpha^{\vn} P_- \tr \psi_1 \rangle_{{-\frac 1  2},{\frac 1  2}}  - \langle i \alpha^{\vn} P_- \tr \phi_1 ,  P_+ \tr \psi_1 \rangle_{{-\frac 1  2},{\frac 1  2}} \nonumber \\ \label{eq:eta0}
 &
 =\langle \eta_0, \psi_0\rangle  \, .
\end{align}
In the next step, we keep $\psi_0=0$ and use the extension operator to obtain $\psi_1 = E(P_+ f) \in H^1(\Omega)$ for $f \in H^{1/2}(\Sigma)$. As a result of~\eqref{eq:eta0}, for all $f \in H^{1/2}(\Sigma)$, we have that
\begin{align}
0&= \langle \phi_0, \langle N, P_+ f\rangle \rangle_{\Hilbert_0}   -  \langle i \alpha^{\vn} P_- \tr \phi_1 ,   P_+ f \rangle_{{-\frac 1  2},{\frac 1  2}}
\\
 \implies 0&=\langle P_+ N \phi_0 - P_+ i \alpha^{\vn}  \tr \phi_1, f \rangle_{{-\frac 1  2},{\frac 1  2}} \, .
\end{align}
As a consequence, $P_+ N \phi_0 = P_+ i \alpha^{\vn}  \tr \phi_1$ in $H^{-1/2}(\Sigma)$. But since we have assumed that $N \in H^{1/2}(\Sigma)$, and since $P_+$ is smooth enough, see \cite[Rem.~2.5]{OV18}, this equality actually holds in $H^{1/2}(\Sigma)$. We can rewrite it as
\begin{align}
P_+ N \phi_0 &= P_+ i \alpha^{\vn}  \tr \phi_1 \nonumber\\
\iff \alpha^{\vn} P_- \alpha^{\vn}  N \phi_0 &=  \alpha^{\vn} i P_-   \tr \phi_1 \nonumber\\
\iff~~~~~~~~  P_-   \tr \phi_1 &=   - i P_- \alpha^{\vn}  N \phi_0  \,,
\end{align}
which means that $\phi$ satisfies the IBC \eqref{ex2ibc2}. Now we know that $P_- \tr \phi_1 \in H^{1/2}(\Sigma)$. In order to use \cite[Prop.~2.16]{OV18} and conclude that $\phi \in \domain(H)$, we still have to show that $P_+ \tr \phi_1 \in H^{1/2}(\Sigma)$. First, observe that
\begin{align}
\label{eq:cminus}
C_-(P_+ \tr \phi_1)= C_-( \tr \phi_1)- C_-(P_- \tr \phi_1) \in H^{1/2}(\Sigma)
\end{align}
because $C_- \circ \tr$ maps $\mathfrak{D}(\Omega)$ into $H^{1/2}(\Sigma)$ and we already know that $P_- \tr \phi_1 \in H^{1/2}(\Sigma)$ while $C_-$ maps it back into the same space. 
Second, we have that
\begin{align}
C_+(P_+ \tr \phi_1) 
&= C_+((1+\B) \tr \phi_1) = C_+(\B P_+ \tr \phi_1) 
\nonumber\\
&= -\mathfrak{B} \left(C_-(P_+ \tr \phi_1)+ i \mathcal{A}(P_+ \tr \phi_1)\right) 
\end{align}
Due to~\eqref{eq:cminus}, the first term is regular. For compact $\Sigma$, the operator $\mathcal{A}$ maps $H^{-1/2}(\Sigma)$ into $H^{1/2}(\Sigma)$. The equality $f= C_+(f) + C_-(f)$ then yields that $P_+ \tr \phi_1 \in H^{1/2}(\Sigma)$.

Thus, we can conclude that $\tr \phi_1 \in H^{1/2}(\Sigma)$, and that all dual pairings in \eqref{eq:eta0} are in fact inner products. Using the interior-boundary condition \eqref{ex2ibc2} both for $\psi$ and $\phi$ together with~\eqref{eq:propofN}, we arrive at
\begin{align*}
& \langle \phi_0, \langle N, \tr \psi_1\rangle \rangle_{\Hilbert_0}
 + \langle P_+ \tr \phi_1 ,   i \alpha^{\vn} P_- \tr \psi_1 \rangle_{{\Lz(\Sigma)}}  - \langle i \alpha^{\vn} P_- \tr \phi_1 ,  P_+ \tr \psi_1 \rangle_{\Lz(\Sigma)}
\\
&
=  \langle i N  \phi_0,  \alpha^{\vn} P_+ N \psi_0 \rangle_{\Lz(\Sigma)}  +  \langle P_+ \tr \phi ,      N \psi_0 \rangle_{{\Lz(\Sigma)}}
\\
&
=  \langle i N  \phi_0,  \alpha^{\vn}  N \psi_0 \rangle_{\Lz(\Sigma)} + \langle - i P_- \alpha^{\vn} N  \phi_0,   N \psi_0 \rangle_{\Lz(\Sigma)}  +  \langle P_+ \tr \phi_1 ,      N \psi_0 \rangle_{{\Lz(\Sigma)}}
\\
&
=   \langle P_- \tr  \phi_1,   N \psi_0 \rangle  +  \langle P_+ \tr \phi_1 ,      N \psi_0 \rangle_{\Lz(\Sigma)} =  \langle \langle N, \tr \phi_1\rangle ,      \psi_0 \rangle_{\Hilbert_0} =\langle \eta_0, \psi_0\rangle_{\Hilbert_0} \, .
\end{align*}
This finally yields $\eta_0 =\langle N, P_+ \tr \phi_1\rangle$. We have thus proved that $(H, \domain(H))$ extends $(H^*,\domain(H^*))$.
\end{proof}

This completes the proof of Theorem~\ref{thm:ex2}.

\section{Proofs of Propositions 1 and 2}
\label{app:proofs}

\begin{proof}[Proof of Proposition~\ref{prop:refl}]
Suppose that $S$ is a complete Lagrangian subspace. Then $E^0\subseteq S^\#$, so $E^0\subseteq S$ and thus $S=E^0 \oplus S'$, where $\oplus$ means orthogonal sum. We claim that $P^+|_{S'}$ is bijective $S'\to E^+$. 

Here is why it must be injective: If $\phi_1,\phi_2\in S'$ and $P^+\phi_1=P^+ \phi_2$, then $\chi:=\phi_1-\phi_2\in S'$ and $P^+\chi=0$. It follows that $0=(\chi|A\chi)=(P^-\chi|A^-P^-\chi)= -\|(-A^-)^{1/2} P^-\chi \|^2$, so $P^-\chi=0$ and thus $\chi \in E^0 \perp S'$, so $\chi=0$. 

Here is why it must be surjective: Otherwise, there would be $\phi\in E^+$ with $\phi\neq 0$ and $\phi \perp P^+ \chi$ for every $\chi \in S'$. As a consequence, $\phi \perp P^+ \chi$ even for every $\chi\in S$. Set $\phi' = (A^+)^{-1} \phi$. Then, for every $\chi\in S$, $(\phi'|A\chi)= ((A^+)^{-1}\phi|A^+ P^+\chi)= (\phi|P^+ \chi)=0$. Hence, $\phi'\in S$; in particular, $(\phi'|A\phi')=0$. But also $\phi'\in E^+$, so $0=(\phi'|A^+\phi')= \|(A^+)^{1/2}\phi' \|^2$ and therefore $\phi'=0$, in contradiction to the assumption.

Likewise, $P^-|_{S'}$ must be bijective $S'\to E^-$. Set
\be
L := (-A^-)^{-1/2} P^-|_{S'}  (P^+|_{S'})^{-1} (A^+)^{-1/2} \,.
\ee
Then $L:E^+ \to E^-$ is bijective and unitary because, for every $\phi\in E^+$ and $\chi = (P^+|_{S'})^{-1} (A^+)^{-1/2}\phi$ (which lies in $S'$),
\begin{subequations}
\begin{align}
0&=(\chi|A\chi)\\ 
&= \bigl(P^+\chi \big| A^+ P^+ \chi \bigr) + \bigl(P^-\chi \big| A^-P^-\chi \bigr)\\
&= \bigl( (A^+)^{-1/2}\phi \big| A^+ (A^+)^{-1/2} \phi \bigr) - (L\phi| L\phi)\\
&= (\phi|\phi) - (L\phi| L\phi)\,.
\end{align}
\end{subequations}
This shows that $S$ must be of the form \eqref{SL}. 

Conversely, \eqref{SL} defines a subspace $S$. To see that it is complete Langrangian, note that for $\phi\in E$ and $\chi\in S$, 
\begin{align}
(\phi|A\chi) 
&= \bigl( P^+\phi \big| A^+ P^+ \chi \bigr) - \bigl( P^-\phi \big| (-A^-) P^- \chi \bigr)\\
&= \bigl( P^+\phi \big| A^+ P^+ \chi \bigr) - \Bigl( P^-\phi \Big|  (-A^-)^{1/2}  L(A^+)^{1/2}P^+ \chi \Bigr)\\
&= \Bigl( A^+ P^+\phi - (A^+)^{1/2}L^\dagger(-A^-)^{1/2}P^-\phi \Big|  P^+ \chi \Bigr)\,.
\end{align}
Under which condition on $\phi$ does this expression vanish for all $\chi\in S$? Since $P^+\chi$ can be chosen arbitrarily in $E^+$, this happens only when
\be
A^+ P^+\phi - (A^+)^{1/2}L^\dagger(-A^-)^{1/2}P^-\phi=0
\ee
or, equivalently (multiply by $L(A^+)^{-1/2}$), $L(A^+)^{1/2} P^+\phi =(-A^-)^{1/2} P^- \phi$, which means $\phi\in S$. 
\end{proof}

\bigskip

\begin{proof}[Proof of Proposition~\ref{prop:ibc}]
Define the endomorphism $\tF:\tE\to\tE$ by
\be\label{tFdef}\def\arraystretch{1.8}
\tF = \left[ \begin{array}{c|c|c|c} 
~~~I~~~&0&0& -\tfrac{i}{\hbar} (A^+)^{-1}N_+\\[1mm]\hline
0&~~~I~~~&0&-\tfrac{i}{\hbar} (A^-)^{-1} N_-\\[1mm]\hline
0&0&~~~I~~~&0\\[1mm]\hline
0&0&0&I
\end{array} \right]
\ee
and set $A' = \tF^\dagger \tA \tF$. Then $\tS$ is complete Lagrangian relative to $\tA$ iff $S':= \tF^{-1} \tS$ is complete Lagrangian relative to $A'$. One finds that
\be\label{tF-1}\def\arraystretch{1.8}
A'=\left[ \begin{array}{c|c|c|c} 
A^+&0&0&0\\[1mm]\hline
0&~A^-&0&0\\[1mm]\hline
0&0&0&\tfrac{i}{\hbar} N_0\\[1mm]\hline
0&0&-\tfrac{i}{\hbar} N_0^\dagger& -\tfrac{1}{\hbar^2} N^\dagger A^{\mathrm{inv}} N
\end{array} \right], \quad
\tF^{-1}= \left[ \begin{array}{c|c|c|c} 
I~&0&0& \tfrac{i}{\hbar} (A^+)^{-1}N_+\\[1mm]\hline
0&~I~&0&\tfrac{i}{\hbar} (A^-)^{-1} N_-\\[1mm]\hline
0&0&~I~&0\\[1mm]\hline
0&0&0&I
\end{array} \right] .
\ee
Note that $A'$ is self-adjoint and that the lower right 4 blocks of $A'$ coincide with $\hA$, so with respect to the decomposition $\tE=E^+_q\oplus E^-_q \oplus \hE^+ \oplus \hE^- \oplus \hE^0$ we can write $A'$ in the form
\be\label{A'}\def\arraystretch{1.4}
A'=\left[ \begin{array}{c|c|c|c|c} 
A^+&0&0&0&0\\\hline
0&~A^-&0&0&0\\\hline
0&0&~\hA^+&0&0\\\hline
0&0&0&~\hA^-&0\\\hline
0&0&0&0&~0~
\end{array} \right].
\ee
We can read off that the positive spectral subspace of $A'$ is $E^+ \oplus \hE^+$, its negative spectral subspace is $E^- \oplus \hE^-$, and its kernel is $\hE^0$. Using Proposition~\ref{prop:refl}, we can read off further that the complete Lagrangian subspaces relative to $A'$ are the subspaces of the form
\be\label{S'}
S'= \Biggl\{\tpsi\in\tE: \begin{pmatrix} (-A^-)^{1/2} \psi_-\\ (-\hA^-)^{1/2} \hP^- \tpsi \end{pmatrix}  =\tL \begin{pmatrix} (A^+)^{1/2} \psi_+ \\ (\hA^+)^{1/2}\hP^+ \tpsi \end{pmatrix} \Biggr\}
\ee
with unitary isomorphism $\tL: E^+ \oplus \hE^+ \to E^- \oplus \hE^-$,
hence the complete Langrangian subspaces relative to $\tA$ are the subspaces of the form \eqref{tStL}.
\end{proof}

\bigskip

\noindent\textit{Acknowledgments.} 
We are grateful to Jonas Lampart, Matthias Lienert, and Lukas Nickel for helpful discussions. This work was supported by the German Research Foundation (DFG) within the Research Training Group 1838 \textit{Spectral Theory and Dynamics of Quantum Systems}.

\end{document}